\definecolor{cvprblue}{rgb}{0.21,0.49,0.74}
\def\eqref#1{equation~\ref{#1}}
\def\1{\bm{1}}
\def\mA{{\bm{A}}}
\def\mB{{\bm{B}}}
\def\mH{{\bm{H}}}
\def\mI{{\bm{I}}}
\def\mK{{\bm{K}}}
\def\mP{{\bm{P}}}
\def\mQ{{\bm{Q}}}
\def\mS{{\bm{S}}}
\def\mV{{\bm{V}}}
\def\mW{{\bm{W}}}
\def\mX{{\bm{X}}}
\def\mY{{\bm{Y}}}
\def\mZ{{\bm{Z}}}
\DeclareMathAlphabet{\mathsfit}{\encodingdefault}{\sfdefault}{m}{sl}
\SetMathAlphabet{\mathsfit}{bold}{\encodingdefault}{\sfdefault}{bx}{n}
\def\sD{{\mathbb{D}}}
\theoremstyle{plain}
\newtheorem{theorem}{Theorem}[section]
\newtheorem{lemma}[theorem]{Lemma}
\newtheorem{corollary}[theorem]{Corollary}
\theoremstyle{definition}
\theoremstyle{remark}
\newcommand{\blue}[1]{\textcolor{black}{#1}}
\definecolor{Gray}{gray}{0.8}
\title{Permutation Equivariance of Transformers and Its Applications}
\author[$\dag$]{Hengyuan Xu}
\author[$\dag$]{Liyao Xiang\thanks{Corresponding author: xiangliyao08@sjtu.edu.cn. The research was supported in part by National Science and Technology Major Project 2021ZD0112801, NSF China (62272306, 62032020, 62136006).}}
\author[$\dag$]{Hangyu Ye}
\author[$\ddag$]{Dixi Yao}
\author[$\dag$]{Pengzhi Chu}
\author[$\ddag$]{Baochun Li}
\affil[$\dag$]{Shanghai Jiao Tong University}
\affil[$\ddag$]{University of Toronto}
\begin{document}
\maketitle
\begin{abstract}
	Revolutionizing the field of deep learning, Transformer-based models have achieved remarkable performance in many tasks. Recent research has recognized these models are robust to shuffling but are limited to inter-token permutation in the forward propagation. In this work, we propose our definition of permutation equivariance, a broader concept covering both inter- and intra- token permutation in the forward and backward propagation of neural networks. We rigorously proved that such permutation equivariance property can be satisfied on most vanilla Transformer-based models with almost no adaptation. We examine the property over a range of state-of-the-art models including ViT, Bert, GPT, and others, with experimental validations. Further, as a proof-of-concept, we explore how real-world applications including privacy-enhancing split learning, and model authorization, could exploit the permutation equivariance property, which implicates wider, intriguing application scenarios. The code is available at \url{https://github.com/Doby-Xu/ST}
\end{abstract}    


\section{Introduction}
Originating as a tool in natural language processing, Transformer now has permeated various fields such as computer vision, multi-modal tasks, etc. Transformer model, introduced in \cite{Transformer}, has revolutionized the way of approaching sequence-based tasks, and further demonstrates versatility and power in a diverse set of tasks with the development of Bert\cite{Bert}, GPT \cite{GPT2,GPT3}, ViT \cite{ViT}, etc. Meanwhile, its intriguing property is discovered, e.g., the outputs being robust or invariant to token shuffling \cite{naseer2021intriguing,lee2019set}. However, previous works either show the property through empirical observation \cite{naseer2021intriguing}, or by modifying the original model structure \cite{lee2019set}, typically for input-order-independent tasks.

The shuffling invariance property of Transformer is widely recognized but not clearly understood by the community. ViT shows superior permutation robustness compared to CNN in patch shuffling where the spatial information is totally disrupted \cite{naseer2021intriguing}, indicating that self-attention is invariant to patch ordering. 
Similarly, 
\cite{lee2019set} proposes Set Transformer, satisfying \textit{permutation equi-/in-variance} --- the output of the model should not change under any permutation of the elements in the input set. Their model inherits the network architectures proposed by \cite{zaheer2017deep} where each input element is first independently fed into a feed-forward neural network and then aggregated by a pooling operation.

As we found, the previous notion of \textit{permutation equi-/in-variance} is quite limited. In this work, we propose our definition of \textit{permutation equivariance} --- meaning that the model trained over any inter- or intra-token permutation is equivalent to the model trained over normal inputs, in contrast to the inter-token permutation in previous works. Our definition essentially pushes one step forward in two-folds: first, the former definition works at a coarser granularity, i.e., tokens are exchanged and permuted while ours incorporates both inter- and intra-token shuffling. Second, our proposed property is stricter in the sense that, it not only requires output equivariance under input permutation in forward propagation (forwarding equivariance) but also demands model weights to be equivalently 
trained in the backward propagation (backprop equi-/in-variance). We show such backprop equi-/in-variance is closely related to forwarding equivariance by modeling the permutation as row/column shuffling in matrices. 


Our findings inspire many potential applications in real-world scenarios. We present a simple, yet effective privacy-enhancing technique for feature offloading, based on the permutation equivariance property. An honest-but-curious cloud merely provides computational service while being blind to the training data, inference data, and the trained model. We show that our method remarkably improves the data utility-privacy tradeoff which is intensively investigated in privacy-preserving split learning \cite{patchshuffling,jeong2022privacy,singh2021disco}. Moreover, the trained model is protected from being effectively fine-tuned by unauthorized parties, yet with almost no impact on the model's performance.

Highlights of our contributions include: first, we reveal the permutation equivariance property in the forward and backward propagation of neural networks. Second, by analyzing the inner workings of the Transformer model, we prove that a wide range of models satisfy permutation equivariance. Third, as a proof-of-concept, we design a privacy-enhancing mechanism and a model authorization scheme to show the promising use of the property. A series of experiments demonstrate the superiority of our design to the state-of-the-art methods in the application scenes.

\section{Related Works}
\label{sec:relatedworks}

\subsection{Transformer}
Dispensing with recurrence and convolutions, Transformer \cite{Transformer} shows superior performance solely on attention mechanisms in translation tasks, and soon becomes the de-facto standard for natural language processing (NLP). Models such as Bert \cite{Bert}, GPT \cite{GPT2, GPT3}, and a myriad of Transformer-based counterparts \cite{lewis2019bart, liu2019roberta} have consistently achieved state-of-the-art results across a wide spectrum of tasks. Most recently, with Transformer models as the core, Large Language Models (LLMs) have achieved great success in recognizing, translating, predicting, or generating text or other contents.

Inspired by its success in NLP, Vision Transformer (ViT \cite{ViT}) was designed and its performance exceeds the state-of-the-art convolutional networks on a variety of image tasks such as classification \cite{touvron2022deit}, object detection \cite{ViTDet}, and semantic segmentation \cite{ViTAdapter}. Substantial efforts have also been devoted into designing powerful pre-trained models \cite{DeiT, BEiT, Dino} with ViT-based backbone, leading to the emergence of a growing family of networks \cite{bao2022all, carion2020end, ViTAdapter, ViTDet}.

\subsection{Permutation Equi-/In-variance}
Permutation invariance property refers to that the output value for a given set is the same regardless of the order of objects in the set \cite{zaheer2017deep,lee2019set}. A closely related property permutation equivariance describes that for function $f$, any permutation $\mP$ and input $\mX$, $f(\mP \mX) = \mP f(\mX)$. Commonly, the permutation takes place at the token level and poses additional requirements on models. For example, Deep sets \cite{zaheer2017deep} derives the necessary and sufficient conditions for permutation invariance in deep models, i.e., the function can be decomposed in the form $\rho(\sum_{\mX \in\mathcal{X} } \phi(\mX))$ for transformations $\phi$ and $\rho$. Set transformer \cite{lee2019set} further instantiates $\phi$ and $\rho$ to an adapted encoder and a decoder, respectively. But 
their permutation invariance limits to the neural network forwarding (or inference), but neglects the invariance in the backward propagation (or training). In contrast, our work does not alter the existing Transformer network structure and illustrates permutation properties in both forward and backward propagations.

The (forwarding) permutation equivariance property has been widely recognized and applied in many works \cite{naseer2021intriguing, patchshuffling, engel2021point, tang2021sensory, lu2023pinat}. It is empirically verified by \cite{naseer2021intriguing} that ViT is more robust to patch shuffling compared to convolutional networks. Leveraging such a property, Yao \emph{et al.} propose a privacy-preserving split learning framework \cite{patchshuffling} by patch-shuffling  the embedding. The permutation invariance property also sees applications in point cloud processing \cite{engel2021point}, reinforcement learning \cite{tang2021sensory}, neural architecture search \cite{lu2023pinat}, etc. However, most target tasks are input-order-insensitive (set-input problem). Different from them, our applications include tasks that rely on the order of the input (e.g., generating text).

\begin{table}[]
	\caption{The notations used.}
	\label{tab:notation}
	\begin{tabular}{lll}
		\hline
		Symbol         & Description                                                                & Shape                     \\ \hline
		$\mX, \mY$     & Data, label                                                             & Task dependent   \\
		$F_1, F_2$     & Embedding, task head        & Task dependent   \\
		$\mZ$          & Feature/ output of $F_1$ & $\mathbb{R}^{n \times d}$ \\
		$\hat{\mY}$    & Prediction/ output of $F_2$ & Same shape w/ $\mY$\\
		$\mathrm{Enc}(\cdot)$ & Transformer encoder                                                        & $\mathbb{R}^{n \times d} \to \mathbb{R}^{n \times d}$ \\ 
		$\mathrm{T}(\cdot)$   &Transformer backbone                                  & $\mathbb{R}^{n \times d} \to \mathbb{R}^{n \times d}$ \\
		$\mP_R$        & Row permutation matrix                                                   & $\mathbb{R}^{n \times n}$ \\
		$\mP_C$        & Column permutation matrix                                                  & $\mathbb{R}^{d \times d}$ \\
		$\mW$          & Weights of Transformer & $\mathbb{R}^{t \times d}$   \\ \hline
	\end{tabular}
\end{table}

\section{Notations}
\label{sec:notations}
\begin{figure}
    \centering
    \includegraphics[width=0.49\textwidth]{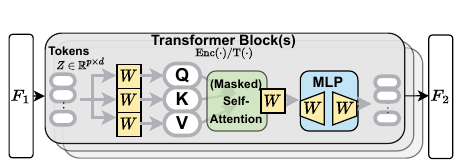}
    \caption{Illustration of Transformer backbone. Learnable weights in permutation are expressed by yellow blocks.}
	\label{fig:transformer}
\end{figure}

\begin{figure*}[t]
	\centering
	\includegraphics[width=0.96\linewidth]{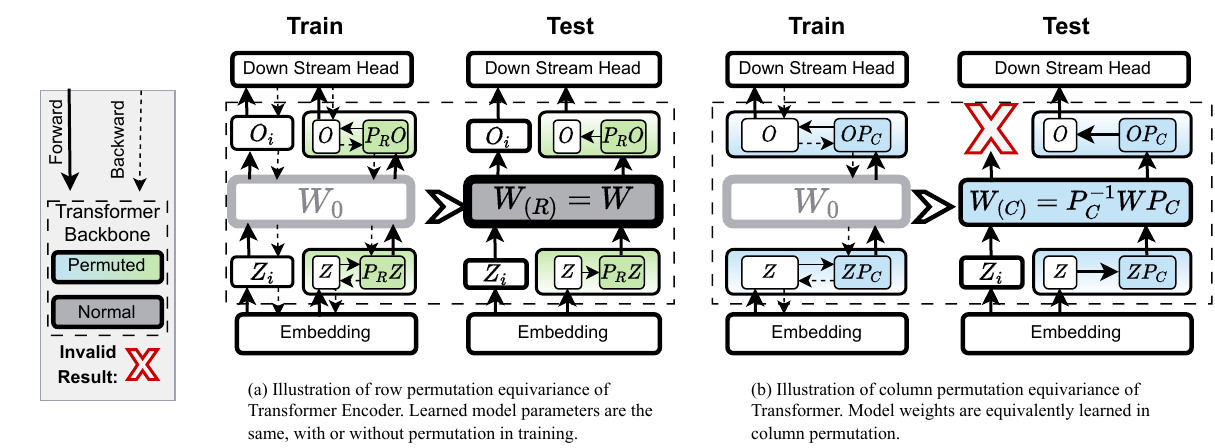}
	\caption{Illustration of permutation properties. $\mW$ indicates main parameters in Transformer backbone (stacked Transformer encoders and decoders).}
	\label{fig:fig1}
\end{figure*}
We summarize the notations used in this paper in Tab.~\ref{tab:notation}, where $n$ is the number of tokens and $d$ is the dimension of tokens. We use $\mW \in \mathcal{R}^{t \times d}$ to generally denote the weight matrices of the Transformer backbone including weights in QKV projection, attention projection, and MLP. Details about the location of $\mW$ in the structure of Transformer backbone are shown in Fig.~\ref{fig:transformer}. Transformer backbone usually consists of stacked encoders and decoders. Some components such as bias and LayerNorm do not have matrix-shaped weights, so we leave their discussion to Sec.~\ref{sec:gamma_bias_multihead_attention}. 

The matrix form of weights allows us to perform row or column permutations. We employ the subscript $_{(R)}$ to refer to row permutation of the corresponding features, weights, derivatives, etc., while using subscript $_{(C)}$ for column permutations. The subscript of $_{(P)}$ generally refers to row, column, and both row and column permutations. Specifically, the row permutation is equivalent to the inter-token permutation in previous works and column permutation is essentially intra-token shuffling. Row/column permutation is expressed by matrix multiplication with the permutation matrix: $\mP_R \mZ$, $\mZ \mP_C$, or $\mP_R\mZ \mP_C$. The unshuffling is represented by multiplying the inverse of the permutation matrix, i.e., $\mP_R^{-1}\mZ_{(P)}$, $\mZ_{(P)}\mP_C^{-1}$, or $\mP_R^{-1}\mZ_{(P)}\mP_C^{-1}$.

\section{Properties and Proofs}
We illustrate the key properties and provide their proofs in this section. Due to space limit, we collect most proofs in Appendix~\ref{sec:permutation_equivariant_operators} and~\ref{sec:proofEnc}.

\subsection{Permutation Equivariance of Transformer}
\label{sec:analysis}
In accordance with the inter-token permutation equivariance of the Transformer encoder \cite{lee2019set}, the row permutation equivariance can be expressed as
\begin{theorem}[\textbf{Row Permutation Forward Equivariance}]
    \label{thm:token_permutation_equivariance}
    Transformer encoder is permutation equivariant w.r.t. token permutations, i.e. the row permutation of the input matrix, in forward propagation, i.e., $\mathrm{Enc}(\mP_R \mZ) = \mP_R \mathrm{Enc} (\mZ)$ for any permutation matrix $\mP_R \in \mathbb{R}^{n \times n}$.
\end{theorem}

Following the property, we perform row permutation to the output of the embedding layer $F_1$ and invert the permutation at the input to the downstream task head $F_2$, i.e., for 
\begin{equation}
    \label{eq:training_scenario}
    \mZ = F_1(\mX), \hat{\mY}= F_2(\mathrm{Enc}(\mZ)),
\end{equation}
we perform shuffling and unshuffling as
\begin{equation}
    \label{eq:training_scenario_permuted}
    \mZ_{(R)} = \mP_R F_1(\mX), \hat{\mY}_{(R)} = F_2(\mP_R^{-1}\mathrm{Enc}(\mZ_{(R)})).
\end{equation}
According to Thm.~\ref{thm:token_permutation_equivariance}, the permuted input to $F_2$ is $\mP_R^{-1}\mathrm{Enc}(\mP_R\mZ) = \mathrm{Enc}(\mZ)$. Given the shuffling and unshuffling operations, we can derive the following key property:
\begin{theorem}[\textbf{Row Permutation Backward Invariance}]
    \label {thm:gradient_permutation_backward_Invariance}
    In backward propagation, gradients of the Transformer encoder in the natural setting (Eq.~\ref{eq:training_scenario}) and the permuted setting (Eq.~\ref{eq:training_scenario_permuted}) are the same: 
    \begin{equation}
        \label{eq:gradient_permutation_Invariance}
        \frac{\partial l} {\partial \mW} =\frac{\partial l} {\partial \mW_{(R)}},
    \end{equation}
    given loss function $l$.
\end{theorem}

 With simple induction, we have:
\begin{corollary}
    \label{cor:learning_permutation_backward_Invariance}
    The weights of the Transformer encoder learned in the natural setting (Eq.~\ref{eq:training_scenario}) and that learned in the permuted setting (Eq.~\ref{eq:training_scenario_permuted}) are the same: 
    \begin{equation}
        \label{eq:learning_backward_Invariance}
        \mW = \mW_{(R)}.
    \end{equation}
\end{corollary}
The shuffling and inverse shuffling procedures and the properties are displayed in Fig.~\ref{fig:fig1}(a). In training, token embeddings can be randomly shuffled without affecting the learning results, provided that the input to the task head is inversely permuted. `Without affecting' here means that the trained weights are exactly the same as the learned weights without any permutation.

Similar to row permutation, we could also perform column shuffling to the input, i.e., intra-token permutation:
\begin{equation}
    \label{eq:training_scenario_col_permuted}
    \mZ_{(C)} = F_1(\mX)\mP_C, \hat{\mY}_{(C)} = F_2(\mathrm{T}_{(C)}(\mZ_{(C)}) \mP_{C}^{-1}). 
\end{equation}
Unfortunately, permutation invariance does not hold column-wise, i.e., $\mathrm{T}(\mZ\mP_C) \neq \mathrm{T}(\mZ)\mP_C$; rather, we could achieve $\mathrm{T}_{(C)}(\mZ\mP_C) = \mathrm{T}(\mZ)\mP_C$ by adjusting the weights in $\mathrm{T}$ according to
\begin{equation}
    \label{eq:weight_permutation}
    \mW_{(C)} = \mP_C^{-1}\mW\mP_C.
\end{equation}
Formally, the property holds as
\begin{theorem}[\textbf{Column Permutation Forward Equivariance}]
    \label{thm:column_permutation_equivariance}
    Stacked Transformer encoder and decoder is permutation equivariant w.r.t. column permutations, in forward propagation, i.e., $\mathrm{T}_{(C)}(\mZ\mP_C) = \mathrm{T}(\mZ)\mP_C$ for any permutation matrix $\mP_C \in \mathbb{R}^{d \times d}$, where weights in $\mathrm{T}_{(C)}$ are weights in $\mathrm{T}$ permuted by Eq.~\ref{eq:weight_permutation}.
\end{theorem}

With Thm.~\ref{thm:column_permutation_equivariance}, we can easily derive that $\hat{\mY}_{(C)}$ in Eq.~\ref{eq:training_scenario_col_permuted} is equal to $\hat{\mY}$ in Eq.~\ref{eq:training_scenario}. This property is depicted in Fig.~\ref{fig:fig1}(b). The case bears resemblance to homomorphic encryption, wherein the input $\mZ$ and backbone weights $\mW$ are `encrypted' by $\mP_C$, and the output is `decrypted' using $\mP_C^{-1}$. Although shuffling does not offer the same level of security compared to homomorphic encryption, it disguises the original inputs and models to some extent. Only those who possess the key $\mP_C$ can obtain the correct output and effectively use the model. Without the key, however, it is hard to guess the correct output, or take advantage of the model (see Sec.~\ref{sec:experiments}).

Likewise, in the corresponding backprop of column-permuted forwarding (Eq.~\ref{eq:training_scenario_col_permuted}), the property holds as:
\begin{theorem}[\textbf{Column Permutation Backward Equivariance}]
    \label{thm:column_permutation_backward_equivariance}
    In backward propagation, gradients of Transformer parameters in the natural setting (Eq.~\ref{eq:training_scenario}) and the permuted setting (Eq.~\ref{eq:training_scenario_col_permuted}) are correlated by the following equation:
    \begin{equation}
        \label{eq:column_permutation_backward_equivariance}
        \frac{\partial l} {\partial \mW_{(C)}}=\mP_C^{-1}\frac{\partial l} {\partial \mW}\mP_C
    \end{equation}
    where $l$ is the loss function.
\end{theorem}

With simple induction, we have:
\begin{corollary}
    \label{cor:column_permutation_weights_equivariance}
    The weights of Transformer learned in the natural setting (Eq.~\ref{eq:training_scenario}), $\mathrm{T}$, and that learned in the permuted setting (Eq.~\ref{eq:training_scenario_col_permuted}), $\mathrm{T}_{(C)}$, are correlated by $\mW_{(C)} = \mP_C^{-1}\mW\mP_C$, given the same randomly initialized weights.
\end{corollary}

\subsection{General Permutation Equivariant Networks}
\label{sec:general_permutation_equivariance_network}
In this section, we show that the row and column permutation equivariance in the forward and backward propagations can be generalized to a broader class of permutation equivariant networks, apart from Transformers. Here permutation generally refers to row and column permutation:
\begin{equation}
    \label{eq:training_scenario_compose}
    \mZ_{(P)} = \mP_R F_1(\mX) \mP_C,~~\hat{\mY}_{(P)} = F_2(\mP_R^{-1}f_{(P)}(\mZ_{(P)})\mP_C^{-1}),
\end{equation} 
where $f_{(P)}$ is the Transformer backbone $f$ has its weight permuted by Eq.~\ref{eq:weight_permutation}.

We show the conclusion in three steps: 1) a majority of common neural network operators satisfy forward permutation equivariance; 2) if an operator is permutation-equivariant in forwarding, it must be permutation-equivariant in the backward propagation; 3) a network composed by the aforementioned operators are both forward and backward permutation-equivariant.  

We provide proofs for 1) in Appendix~\ref{sec:permutation_equivariant_operators} covering a wide variety of operators, such as linear projection, attention, norms, element-wise operators (shortcut skip, Hadamard product, activation, etc.), and softmax. The exception is the type of operators working on sliding windows, e.g., convolutional. For 2) and 3), we will prove the following theorem:
\begin{theorem}[General Permutation Equivalent Networks]
    \label{thm:general_permutation_equivariance}
    If $f$ is composed by $f_N \circ f_{N-1} \circ \cdots \circ f_1$ where $f_1, \ldots, f_N$ that are permutation-equivariant in the forward propagation, and each contains weights (if any) as linear arguments, i.e., $f = f_N( \cdots f_2 ( f_1(\mZ \mW_1^{\top}) \mW_2^{\top}) \cdot \cdots \mW_N^{\top})$, $f$ is permutation-equivariant in the backward propagation, i.e., the weights $\mW$ in $f$ are associated with those of $f_{(P)}$ by $\mW_{(P)} = \mP_C^{-1}\mW\mP_C$. 
\end{theorem}

To prove Thm.~\ref{thm:general_permutation_equivariance}, we first prove the following lemma:
\begin{lemma}
    \label{lem:gradient_feature_equivalent}
    If $f$ is composed by permutation-equivariant operators as in Thm.~\ref{thm:general_permutation_equivariance}, the derivative of $f$ with respect to feature $\mZ$ in the natural (Eq.~\ref{eq:training_scenario}) and in the permuted (Eq.~\ref{eq:training_scenario_compose}) settings are correlated by
    \begin{equation}\label{eq:backpereq}
        \frac{\partial l} {\partial \mZ_{(P)}}=\mP_R\frac{\partial l} {\partial \mZ}\mP_C.
    \end{equation}
\end{lemma}

\begin{proof}
    It is obvious that in forwarding, the combination of permutation-equivariant operators remains to be permutation-equivariant since one can perform induction on the simple case of $ f_{1(P)}(f_{2(P)}(\mP_R \mZ \mP_C)) = f_{1(P)}(\mP_Rf_{2}( \mZ )\mP_C) =\mP_Rf_{1}(f_{2}( \mZ ))\mP_C $. Hence through all layers in $f$, it holds that
    \begin{equation}\label{eq:f2forward}
        f_{(P)}(\mZ_{(P)}) = \mP_R f(\mZ) \mP_C. 
    \end{equation}
    By Eq.~\ref{eq:training_scenario_compose} and Eq.~\ref{eq:f2forward}, the forward propagation feeds $f(\mZ)$ into $F_2$, which is equivalent to the vanilla forwarding, and hence the losses are the same.
    
    Now consider the backward propagation. As the loss $l$ is invariant by permutation, we differentiate $l$ through $\mZ_{i(P)}$ which is the intermediate output of layer $i$ of $f_{(P)}$. By the forward permutation equivariance, we know that it is associated with $\mZ_{i}$, the output of layer $i$ at $f$, as $\mZ_{i(P)} = \mP_R \mZ_{i} \mP_C$. Hence we have $\mathrm{d}\mZ_{i(P)} = \mP_R \mathrm{d}  \mZ_{i} \mP_C$. Therefore, the derivative of $l$ with respect to $\mZ_{i(P)}$ for any layer $i$ in the permuted setting is 
    \begin{align*}
        \mathrm{d}l&\triangleq \mathrm{tr}(\frac{\partial l}{\partial \mZ_{i(P)}} ^{\top} \mathrm{d}\mZ_{i(P)}) 
        =\mathrm{tr}(\frac{\partial l}{\partial \mZ_{i(P)}} ^{\top} \mP_R\mathrm{d}\mZ_{i}\mP_C)\\
        &=\mathrm{tr}(\mP_C\frac{\partial l}{\partial \mZ_{i(P)}} ^{\top} \mP_R\mathrm{d}\mZ_{i})
        =\mathrm{tr}((\mP_R^{\top}\frac{\partial l}{\partial \mZ_{i(P)}} \mP_C^{\top})^{\top} \mathrm{d}\mZ_{i}).
    \end{align*}
    The last equality suggests $\frac{\partial l}{\partial \mZ_{i}} =\mP_R^{\top}\frac{\partial l}{\partial \mZ_{i(P)}} \mP_C^{\top}$ according to Thm. 6 of \cite{hu2012matrix}. Since $\mP_R^{\top} = \mP_R^{-1}$ and $\mP_C^{\top} = \mP_C^{-1}$, Eq.~\ref{eq:backpereq} holds completing the proof.
\end{proof}

Then we prove Thm.~\ref{thm:general_permutation_equivariance}.
\begin{proof}
Without causing any confusion, let $\mZ_{i}$ denote the output of $f_i(f_{i-1}( \cdots f_1(\mZ \mW_1^{\top}) \cdots \mW_{i-1}^{\top})\mW_{i}^{\top})$. We differentiate the loss through $\mZ_{i}$ by
    \begin{align*}
        \mathrm{d}l &  \triangleq \mathrm{tr}((\frac{\partial l} {\partial \mZ_{i}})^{\top}\mathrm{d}\mZ_{i}) =  \mathrm{tr}((\frac{\partial l} {\partial \mZ_{i}})^{\top}\mathrm{d}( \mZ_{i-1} \cdot \mW_{i}^{\top}))\\
        & =   \mathrm{tr}((\frac{\partial l} {\partial \mZ_{i}})^{\top} \mZ_{i-1} \mathrm{d}\mW_{i}^{\top}) =  \mathrm{tr}((\frac{\partial l} {\partial \mZ_{i}}^{\top}\mZ_{i-1})^{\top}\mathrm{d}\mW_{i}).
    \end{align*}
    Hence the gradient of $\mW_{i}$ is $\frac{\partial l} {\partial \mW_i}=\frac{\partial l} {\partial \mZ_i}^{\top} \mZ_{i-1}$. Similarly, the gradient of $\mW_{i(P)}$ is as follows: 
    \begin{equation}
        \label{eq:general_proof_last}
        \begin{split}
            \frac{\partial l} {\partial \mW_{i(P)}}
            &=\frac{\partial l} {\partial \mZ_{i(P)}}^{\top}\mZ_{i-1(P)}=\mP_C^{\top}\frac{\partial l}{\partial \mZ_{i}}^{\top} \mP_R^{\top} \mP_R \mZ_{i-1}\mP_C\\
            &=\mP_C^{\top}\frac{\partial l} {\partial \mZ_{i}}^{\top}\mZ_{i-1}\mP_C=\mP_C^{-1}\frac{\partial l} {\partial \mW_{i}}\mP_C.
        \end{split}
    \end{equation}
The second equality holds due to permutation equivalence of $\mZ_{i-1}$ and Lem.~\ref{lem:gradient_feature_equivalent}. Following a similar argument to Corollary~\ref{cor:column_permutation_weights_equivariance}, we have $\mW_{i(P)} = \mP_C^{-1}\mW_{i}\mP_C$ for any layer $i$, given the same randomly initialized weights in the natural and permuted settings. Thereby the backward permutation equivariance holds. Proof completes.   
\end{proof}

It should be noted that the pair of arguments --- permutation forward and backward equivariance --- seem to be circular, i.e., the backward equivariance holds on the condition of the forward equivariance while the latter holds at $\mW_{(P)} = \mP_C^{-1}\mW\mP_C$. But one can break such a circle by permuting (the inputs and) the initial weights by Eq.~\ref{eq:weight_permutation} in the first place. Then the forward equivariance holds and one can train the model with Eq.~\ref{eq:training_scenario_col_permuted}. Eventually, the backward equivariance can be derived. In the case of training from scratch, it does not matter if the initial weights are permuted since they are all random.

\textbf{This shows why the permutation-invariance in Set Transformer \cite{lee2019set} is a pseudo-invariance.} The proposed permutation-invariance head, PMA \cite{lee2019set},  only achieves forward permutation invariance. However, input of PMA does not follow Lem.~\ref{lem:gradient_feature_equivalent}, but rather $\frac{\partial l} {\partial \mZ_{(P)}}=\frac{\partial l} {\partial \mZ}$ which destroys the permutation equivariant structure of Eq.~\ref{eq:general_proof_last}: instead of getting $\frac{\partial l} {\partial \mW_{i(P)}}=\frac{\partial l}{\partial \mZ_{i}}^{\top} \mP_R^{\top} \mP_R \mZ_{i-1}$, they obtain $\frac{\partial l} {\partial \mW_{i(P)}}=\frac{\partial l}{\partial \mZ_{i}}^{\top} \mP_R \mZ_{i-1}$ that the gradients are not aligned anymore.

\subsection{Other Components}
\label{sec:gamma_bias_multihead_attention}
Notably, the weights in Layer Norm ($\gamma$), bias ($b$) and MLP defined in the form of $\mathrm{MLP}(\mX) = \sigma(\mX\mW_1^\top)\mW_2^\top$ do not follow Eq.~\ref{eq:weight_permutation} for permutation equivariance. But the intuition is the same: pure token shuffling (row permutation) would not affect these operators; column permutation should be corrected by parameters permutation: 
\begin{align}
&   \gamma_{(C)} = \gamma\mP_C,~b_{(C)} = b \mP_C, \\
    &   \mW_{1(C)} = \mW_1\mP_C,~\mW_{2(C)} = \mP_C^{-1}\mW_2.
\end{align}

The proof can be found in Appendix~\ref{sec:permutation_equivariant_operators} and \ref{sec:proofEnc}. Parameters in $F_1$, including position embedding, are also not influenced. Proofs can be found in Appendix~\ref{proof_edge}.


\blue{It is worth noting that the position embedding in $F_1(\cdot)$ is not affected by permutation in both forward and back-propagation since it is added before permutation. The proof in Appendix \ref{proof_edge} shows that parameters outside of the permutation-inversion scheme are not affected.}

\section{Experiments}
\label{sec:experiments}
In this section, we provide empirical evidence to support our theoretical findings.  As a proof-of-concept, we conduct a series of experiments to demonstrate the potential applications of these properties in real-world scenarios.

\subsection{Setup}
\label{sec:setup}
Our implementation is built on {\tt Pytorch} and {\tt Torchvision}. We validate our theorems by a range of models and tasks including: the {\tt timm} \footnote{https://github.com/rwightman/pytorch-image-models} version of pre-trained ViT-Base for image classification, the official version of ViT-Adapter \cite{ViTAdapter} for semantic segmentation (pre-trained with DeiT, using UperNet), huggingface's pre-trained Bert and GPT2 for text classification, and GPT2 for text generation. The image tasks are chosen as 10-label classification on Cifar10 \citep{krizhevsky2009learning} consisting of 60,000 natural images, semantic segmentation on ADE20k \cite{zhou2017scene,zhou2019semantic} containing 25,210 images with pixel-level annotations. For text classification, we use IMDB \citep{IMDB} dataset with 50,000 movie reviews for fine-tuning a sentimental classifier, and a natural language inference dataset SNLI\footnote{https://nlp.stanford.edu/projects/snli/} with over 500k sentences. For text generation, we use the GPT2 trained by huggingface on WebText dataset \cite{GPT2} for zero-shot generation on WikiText2 dataset \cite{merity2016pointer}. 

All Transformer backbones are of base size, characterized by a token dimension of 768, 12 heads, and 12 layers. The evaluation metrics are test accuracy for classification tasks, mIoU, aAcc, and mAcc for semantic segmentation, and perplexity for text generation. More detailed setup can be found in Appendix~\ref{sec:exp_setup}.

\subsection{Properties Validation}
\label{sec:property_validation}
\textbf{Row Permutation Forward Equivariance.}
We validate Thm.~\ref{thm:token_permutation_equivariance} by performing inference on `vit\_base\_patch16\_224' pre-trained timm model, the officially released ViT-Adapter, and a small Bert with 2 layers and input of size $(128, 256)$, since the padding mask in huggingface implemented Bert do not follow the properties. The ViT-Base model and the small Bert are fine-tuned on Cifar10 and SNLI respectively before inference. We feed into the same trained model the permuted (Eq.~\ref{eq:training_scenario_permuted}) and normal (Eq.~\ref{eq:training_scenario}) features of test data, respectively.

The results for image tasks are displayed in Tab.~\ref{tab:validation_forwards_CV}. It is evident that the test results for ViT-Base and ViT-Adapter, both with and without row permutation, are identical up to two decimal places. Similarly, the output accuracies of Bert are closely shown in Tab.~\ref{tab:validation_backwards}. 

\begin{table}[t]
    \caption{Test results (\%) of ViT-Base for image classification and ViT-Adapter for segmentation. `Col.' means column and `P.' stands for permutation.}
    \label{tab:validation_forwards_CV}
    \begin{tabular}{lcccc}
    \hline
                      & ViT-Base & \multicolumn{3}{c}{ViT-Adapter} \\
                      & Acc      & aAc      & mIoU    & mAcc   \\ \hline
    Test w/o P.       & 97.65    & 83.12     & 48.75    & 60.09    \\
    Test w/ Row P.  & 97.65    & 83.12     & 48.75    & 60.09    \\
    Test w/ Col. P.   & 97.65    & 83.12     & 48.75    & 60.09    \\ \hline
    \end{tabular}
\end{table}

\begin{table}[]
    \caption{Test accuracy (\%) of ViT-Base for image classification and Bert for text classification. Models are trained by Eq.~\ref{eq:training_scenario} and Eq.~\ref{eq:training_scenario_permuted}.`Col.' means column and `P.' stands for permutation.}
    \label{tab:validation_backwards}
    \centering
    \begin{tabular}{lcc}
    \hline
                      & ViT-Base & Bert (Small) \\ \hline
    Train w/o P.      & 97.75    & 76.4   \\
    Train w/ Row P. & 97.73    & 76.5   \\
    Train w/ Col. P.  & 97.94    & -    \\ \hline
    \end{tabular}
\end{table}

\begin{table}[]
    \caption{Test accuracy (\%) of Bert and GPT2 for text classification, and perplexity of GPT2(G) for text generation. }
    \label{tab:validation_forward_NLP}
    \centering
    \begin{tabular}{lccc}
    \hline
                    & Bert  & GPT2   & GPT2(G) \\ \hline
    Test w/o P.     & 94.00 & 94.03 & 48.55  \\
    Test w/ Col. P. & 94.00 & 94.03 & 48.55  \\ \hline
    \end{tabular}
    \end{table}

\begin{table}[]
    \caption{Test accuracy (\%) of Bert and GPT2 for text classification with models trained by Eq.~\ref{eq:training_scenario} and Eq.~\ref{eq:training_scenario_col_permuted}.  }
    \label{tab:validation_backward_NLP}
    \centering
    \begin{tabular}{lccc}
    \hline
                        & Bert  & GPT2      \\ \hline
    Train w/o P.     & 94.00 & 94.03 \\
    Train w/ Col. P. & 93.72 & 93.66 \\ \hline
    \end{tabular}
\end{table}

\textbf{Row Permutation Backward Equivariance.}
To validate Thm.~\ref{thm:gradient_permutation_backward_Invariance}, we train the pre-trained ViT-Base and the small Bert by Eq.~\ref{eq:training_scenario} and Eq.~\ref{eq:training_scenario_permuted}, respectively. As shown in Tab.~\ref{tab:validation_backwards}, the test results for ViT-Base and Bert with and without row permutation are almost identical, which is consistent with our findings.

\textbf{Col. Permutation Forward Equivariance.}
To validate Thm.~\ref{thm:column_permutation_equivariance}, we test ViT-Base, ViT-Adapter, Bert, and GPT2 on corresponding test datasets. The language models are fine-tuned on the IMDB training set before testing. We first test the model trained by Eq.~\ref{eq:training_scenario} to get the results without permutation, and then permute the trained model by Eq.~\ref{eq:weight_permutation} to get $\mathrm{T}_{(C)}$ in Eq.~\ref{eq:training_scenario_col_permuted}.  The model is reported under the `Test w/ Col. P.'  category in tables.

The results are displayed in Tab.~\ref{tab:validation_forwards_CV} and Tab.~\ref{tab:validation_forward_NLP}. It is evident that the test results for ViT, Bert, and GPT2, both with and without column permutation, are almost identical. Additionally, if we treat the permuted model $\mathrm{T}_{(C)}$ as a normal model and feed normal inputs, the test accuracy falls to that of random guess, i.e., about 10\% on Cifar10 and 50\% on IMDB, indicating power similar to `encryption.' On the text generation task, with or without column permutation, the resulting perplexity remains at $48.55$. If we feed the permuted model $\mathrm{T}_{(C)}$ with normal inputs, the perplexity of the output rises to the order of $10^7$, akin to a randomly initialized language model.

 \textbf{Col. Permutation Backward Equivariance.}
To validate Thm.~\ref{thm:column_permutation_backward_equivariance}, we fine-tune pre-trained ViT-Base, Bert, and GPT2 by Eq.~\ref{eq:training_scenario_col_permuted} with their initial weights permuted by Eq.~\ref{eq:weight_permutation}. Note that if the model is trained from scratch, it is unnecessary to permute its initial weights as they are random. As shown in Tab.~\ref{tab:validation_backwards} and Tab.~\ref{tab:validation_backward_NLP}, the test results for the three models, trained with and without column permutation, are very close with minor gaps due to randomness. Thus all properties are validated.

\subsection{Applications}

In this section, we showcase the potential applications of permutation equivariance property in a real-world context as a proof-of-concept. First, we apply the property to enhancing privacy-preserving split learning in \cite{jeong2022privacy} and \cite{patchshuffling}. Following that, we demonstrate how the column permutation of weights can serve as a model `encryption' or `authorization' tool, which restricts parties without permutation `key' from utilizing the model for inference or fine-tuning.

\begin{figure}[t]
    \centering
    \begin{subfigure}{1\linewidth}
        \centering
      \includegraphics[width=0.7\linewidth]{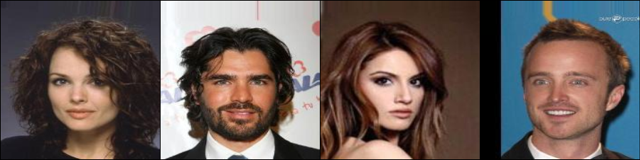}
      \caption{Reconstructed from unprotected features.}
      \label{fig:attack_ori}
    \end{subfigure}
    \\
    \begin{subfigure}{1\linewidth}
        \centering
        \includegraphics[width=0.7\linewidth]{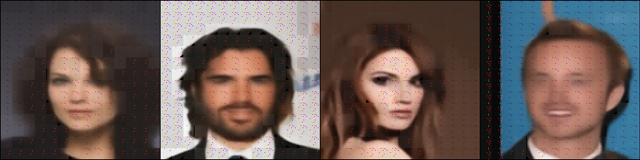}
        \caption{Reconstructed from features protected by GN \cite{patchshuffling}.}
        \label{fig:attack_gn}
      \end{subfigure}
    \\
    \begin{subfigure}{1\linewidth}
        \centering
            \includegraphics[width=0.7\linewidth]{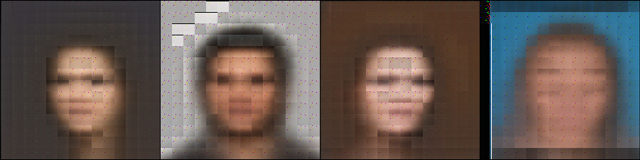}
            \caption{Reconstructed from features protected by GN+ (ours).}
            \label{fig:attack_gn_plus}
          \end{subfigure}
          \\
          \begin{subfigure}{1\linewidth}
            \centering
            \includegraphics[width=0.7\linewidth]{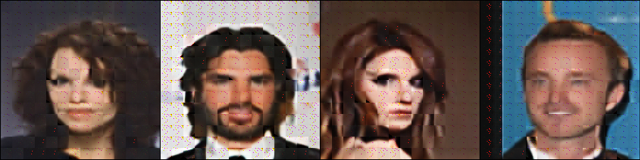}
            \caption{Reconstructed from features protected by LP \cite{jeong2022privacy}.}
            \label{fig:attack_lp}
          \end{subfigure}
          \\
          \begin{subfigure}{1\linewidth}
            \centering
            \includegraphics[width=0.7\linewidth]{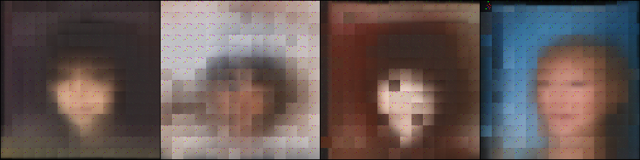}
            \caption{Reconstructed from features protected by LP+ (ours).}
            \label{fig:attack_lp_plus}
          \end{subfigure}
    \caption{Reconstruction results of model inversion attacks to features. `+' means the privacy-preserving technique is enhanced by our row permutation.}
    \label{fig:attack}
\end{figure}

\textbf{Privacy-Preserving Split Learning.}
We implement the split learning framework following \cite{patchshuffling,jeong2022privacy} on a multi-label classification task on CelebA dataset \cite{liu2015deep}, which comprises 2,022,599 images from 10,177 celebrities. In the split learning setting, the server holds the Transformer backbone $T$ while the client holds private data, label, the embedding layer $F_1$, and the task head $F_2$. $F_1$ is computed locally on the client as the input data should be kept private from the untrusted server. $F_2$ is also with the client due to the labels are unknown to the server. The client sends the embeddings to the server and retrieves outputs of $T$ for downstream tasks. The semi-honest server honestly follows the protocol but tries to reconstruct the private data from the embedding by model inversion attack. 

We assess the utility of downstream tasks by classification accuracy and how private the client's embeddings are in protecting the inputs. The privacy is expressed by reconstruction metrics including Structural Similarity (SSIM), Peak Signal to Noise Ratio (PSNR) \citep{hore2010image}, and F-SIM of an MAE \cite{he2022masked} inversion model \cite{dosovitskiy2016inverting, fredrikson2015model, melis2019exploiting}. The worse the reconstruction, the better the privacy protection. For more details of the setup, please see Appendix~\ref{sec:exp_setup}, which aligns with the threat model in \cite{patchshuffling,jeong2022privacy}.

We perform row permutation to enhance the protection level of Gaussian Noise (GN), loss-pass filter (LP) with a radius of 0.05, and Batch Shuffle (BS) techniques. All baselines follow their original implementation. Two row permutation matrices of shape $197 \times 197$ are applied. The results are presented in Tab.~\ref{tab:app_privacy}. As shown, for GN and LP, the row permutation significantly enhances privacy while maintaining the same level of utility as the baselines. For BS, both the utility and privacy are enhanced by our permutation way. The reconstruction visualization effect under model inversion attack is provided in Fig.~\ref{fig:attack} which clearly shows considerable privacy improvement.

To see how permutation affects the trade-off between accuracy and privacy, we report the trade-off curve of LP in Fig.~\ref{fig:app_privacy} by varying the radius $r$ from 0.01 to 0.05. Typically, with the decrease of the radius, the privacy is enhanced with the cost of utility: as the radius is reduced to 0.01, the SSIM, PSNR, and F-SIM scores decrease to 0.190, 7.950, and 0.196, respectively, while the accuracy drops to 85.6\%. With shuffling, one does not need to reduce $r$ to reach a high privacy level. For example, $r = 0.05$ achieves an accuracy of $90.2\%$ under shuffling while reaching privacy no worse than $r = 0.01$ without shuffling.
Overall, with row permutation, LP is able to obtain a better trade-off curve for all ranges of radiuses, as shown in Fig.~\ref{fig:app_privacy}, with high privacy levels at all times.

\begin{table}[]
    \caption{Utility and privacy on CelebA. $\downarrow$ means desirable direction. `+' means an enhanced version by our row permutation. Our methods are marked in light gray.}
    \label{tab:app_privacy}
    \begin{tabular}{lcccc}
    \hline
              & Utility      & \multicolumn{3}{c}{Privacy}        \\ \cline{2-5} 
              & Acc /\% $\uparrow$ & SSIM $\downarrow$ & PSNR $\downarrow$  & F-SIM $\downarrow$\\
    SL                                  & 91.9         & 0.645             & 16.25  & 0.933 \\
    GN \cite{jeong2022privacy}          & 88.8         & 0.457             & 15.47  & 0.663 \\
    \rowcolor{Gray} GN+                                 & 88.8         & 0.167             & 8.068  & 0.187 \\
    LP  \cite{jeong2022privacy}         & 90.1         & 0.450             & 14.71  & 0.614 \\
    \rowcolor{Gray}LP+                                 & 90.2         & 0.110             & 5.687  & 0.154 \\
    BS \cite{patchshuffling}            & 90.8         & 0.148             & 8.180  & 0.176 \\
    \rowcolor{Gray}BS+                                 & 91.1         & 0.143             & 7.613  & 0.167 \\ \hline
    \end{tabular}
    \end{table}

\begin{figure*}
  \centering

  \begin{minipage}{0.32\linewidth}
      \includegraphics[width=1\linewidth]{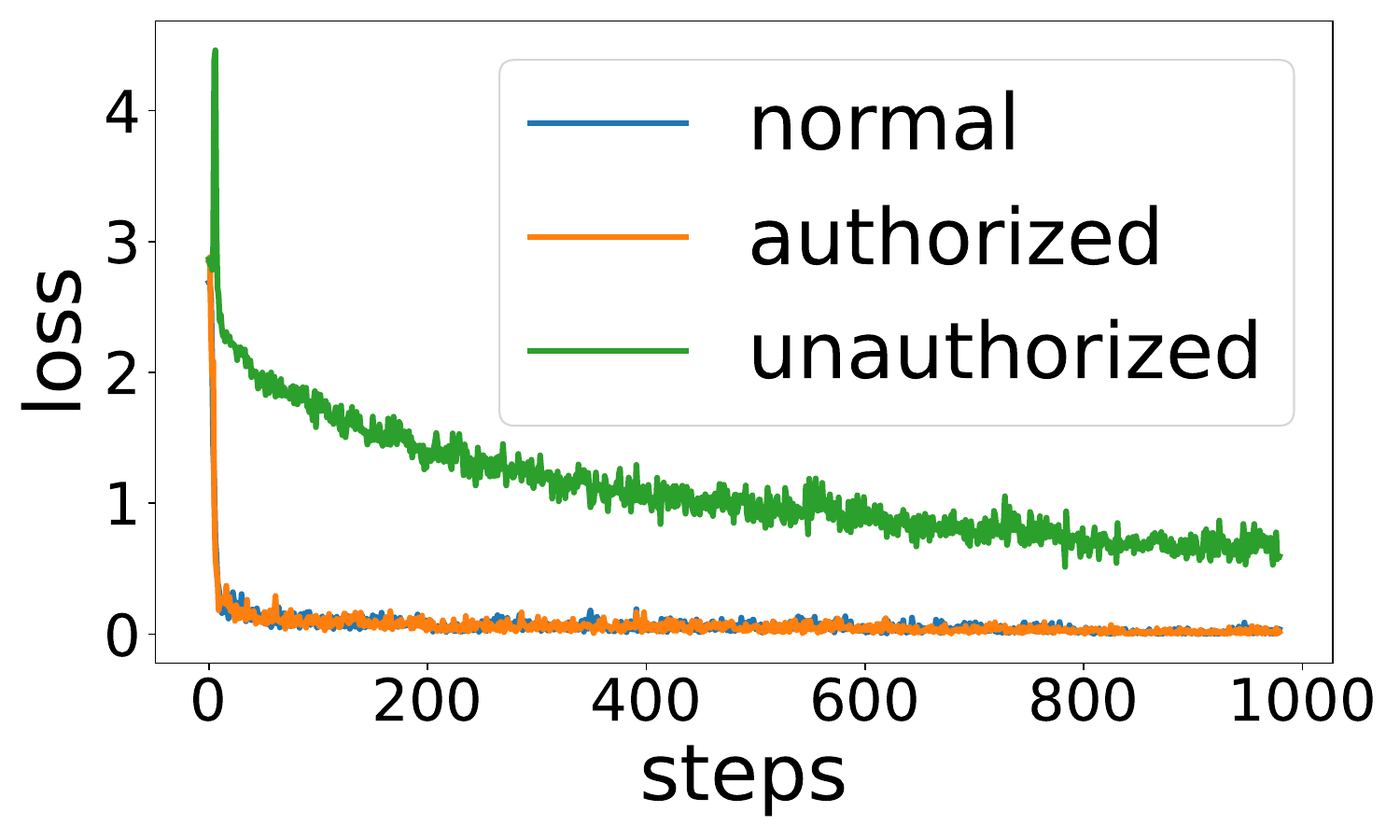}
      \caption{Training curves of fine-tuning ViT. The authorized has a performance close to normal while the unauthorized has a high loss.}
      \label{fig:loss_FT}
  \end{minipage}
  \hfill
  \begin{minipage}{0.32\linewidth}
      \includegraphics[width=1\linewidth]{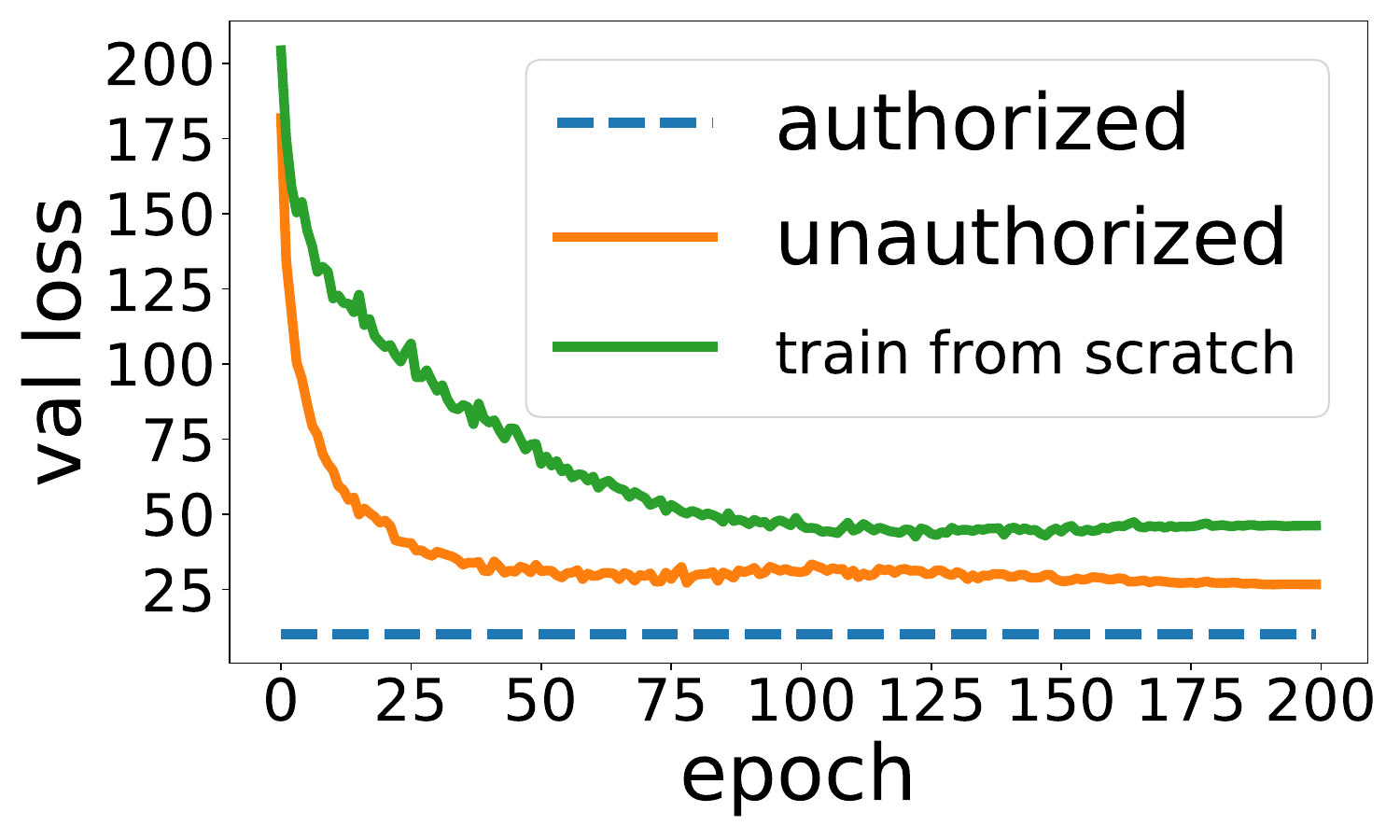}
      \caption{Validation loss curves of ViT trained to convergence. The unauthorized is far worse than the authorized but better than train-from-scratch.  }
      \label{fig:loss_TFS}
  \end{minipage}
  \hfill
  \begin{minipage}{0.32\linewidth}
    \includegraphics[width=1\linewidth]{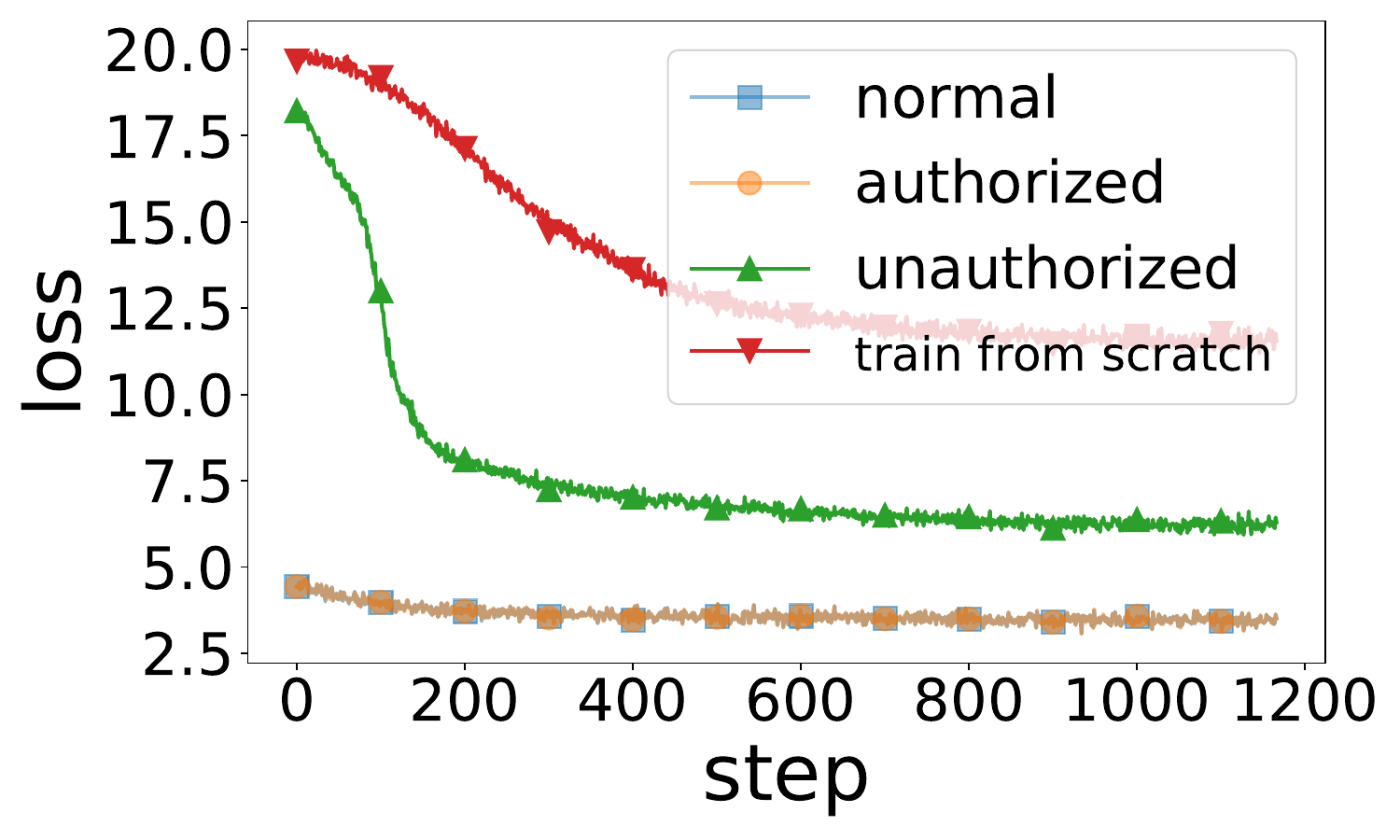}
    \caption{Training curves of fine-tuning GPT2. The unauthorized is far worse than the authorized but better than train-from-scratch. }
    \label{fig:loss_GPT}
\end{minipage}
\end{figure*}
\begin{figure}
  \centering
  \begin{minipage}{0.49\linewidth}
    \includegraphics[width=1.0\linewidth]{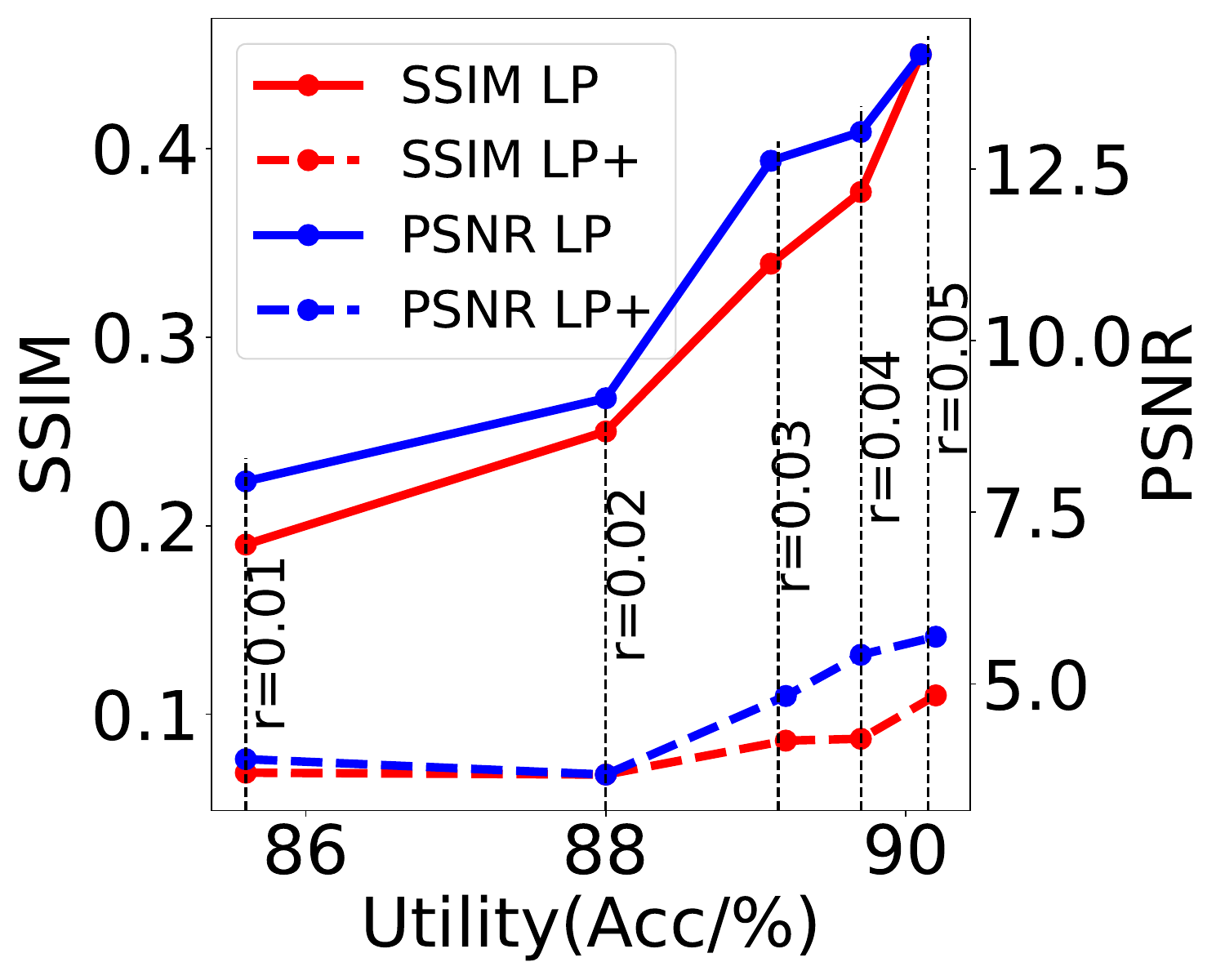}
    \caption{The utility-privacy trade-off curve by varying the radius of LP. The radius is denoted by $r$ in the figure.}
    \label{fig:app_privacy}
  \end{minipage}
  \hfill
  \begin{minipage}{0.49\linewidth}

    \centering
    \includegraphics[width=1.\linewidth]{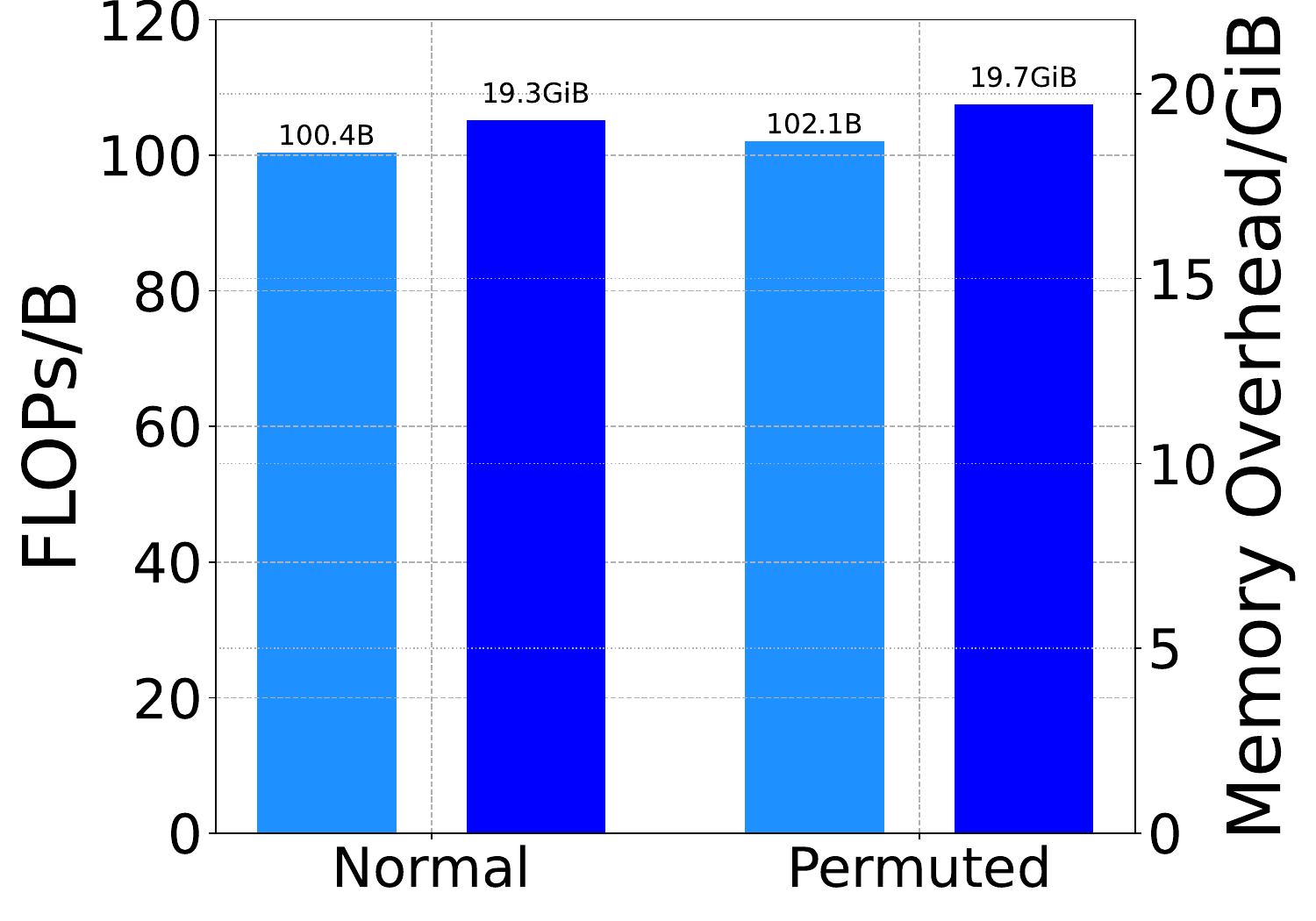}
    \caption{Comparison of computational and memory overhead. }
    \label{fig:cost}
  \end{minipage}
\end{figure}

\textbf{Model Encryption.}
Model parameter leakage, especially for proprietary large language models, could harm the intellectual property of the model owner. We consider an application that `encrypts' the trained model weights by a permutation key, i.e., only the party who knows the permutation matrix can correctly use the model. As previously reported in Sec.~\ref{sec:property_validation}, inference on the `encrypted' model without the column-permutation key is close to random guesses. Further, we show that without the key, fine-tuning the `encrypted' model is also ineffective.

We compare the training curves of three fine-tuning approaches. As a benchmark, the `normal' approach fine-tunes the pre-trained models with normal procedures. Second, the `authorized' approach refers to that the pre-trained model is permuted by Eq.~\ref{eq:weight_permutation} ($\mP_C$ of shape $768 \times 768$) to obtain $\mathrm{T}_{(C)}$ before being fine-tuned with permutation (following Eq.~\ref{eq:training_scenario_col_permuted}). The permutation matrix $\mP_C$ acts as an authorization key in the training process. Third, assume $\mathrm{T}_{(C)}$ is stolen by an unauthorized party who does not own $\mP_C$. The unauthorized party then tries to fine-tune $\mathrm{T}_{(C)}$ with normal procedures.

 The testing accuracy of the fine-tuned models is displayed in Tab.~\ref{tab:application_encryption} with their training curves until 1000 steps are provided in Fig.~\ref{fig:loss_FT} for ViT-Base. The figure shows that the unauthorized can hardly learn as much as the authorized. To see if the unauthorized would further improve with a better training strategy, we train both the authorized and unauthorized until full convergence (of the authorized) in Fig.~\ref{fig:loss_TFS}. The train-from-scratch model is trained with the same strategy, suggesting the lower bound of the unauthorized: in the worst case, the unauthorized party has to train random weights from the start, indicating no use of the `encrypted' model.
 
It is observed that the performance of the unauthorized sits in between the authorized and that of training from scratch. A similar trend is observed in fine-tuning GPT2 on WikiText2 for text generation. 
The convergence curves are given in Fig.~\ref{fig:loss_GPT} where the authorized behaves almost identically to the normal. The unauthorized has a curve in between the normal and the train-from-scratch. Hence one can conclude that permutation indeed prevents anyone who is unaware of the permutation matrices from taking advantage of a trained model for inference or from releasing the full power of the pre-trained model by fine-tuning.

\begin{table}[]
    \caption{Test accuracy (\%) of the normal models fine-tuned in normal procedure, `encrypted' models fine-tuned in authorized and unauthorized manners. }
    \label{tab:application_encryption}
    \centering
    \begin{tabular}{lccc}
    \hline
     Model-FineTuning  & ViT   & Bert  & GPT2  \\ \hline
    Normal-Normal   & 97.74 & 94.00 & 94.03 \\
    Encrypted-Authorized & 97.94 & 93.72 & 93.66 \\
    Encrypted-Unauthorized  & 78.03 & 77.25 & 86.16  \\ \hline
    \end{tabular}
    \end{table}

  

  \subsection{Efficiency}   
  To see how much additional overhead our method incurs to vanilla learning, we evaluate the efficiency of our method by floating point operations (FLOPs) of the Transformer-Base backbone, and the memory consumption of training ViT-Base on Cifar10 in the normal and the permuted settings, respectively. For a more precise comparison, exactly four more additional matrix multiplications are applied to the permuted setting: two row permutation by $\mP_R\in \mathbb{R}^{197 \times 197}$ and two column permutation by $\mP_C\in \mathbb{R}^{768 \times 768}$. $\mP_R$ is sampled per batch whereas $\mP_C$ is fixed for the model. Results of Fig.~\ref{fig:cost} show that our method are almost as efficient as normal learning, incurring negligible overhead in computation and memory consumption.

\section{Conclusion}
\label{sec:conclusion}
We revise the concept of permutation equivariance and prove it as a property for Transformer-based models. The permutation equivariance of prior works is only a subset of ours: we show inter- and intra- token permutation equivariance in both forward and backward propagation. We not only theoretically analyze the property but also validate it through experiments. Finally, we propose two applications of this property: privacy-enhancing split learning and model authorization. Our research contributes to a deeper understanding of Transformer-based models and their wide potential applications.


{
    \small
    \bibliographystyle{ieeenat_fullname}
    \bibliography{main}

\begin{thebibliography}{39}
\providecommand{\natexlab}[1]{#1}
\providecommand{\url}[1]{\texttt{#1}}
\expandafter\ifx\csname urlstyle\endcsname\relax
  \providecommand{\doi}[1]{doi: #1}\else
  \providecommand{\doi}{doi: \begingroup \urlstyle{rm}\Url}\fi

\bibitem[Bao et~al.(2023)Bao, Nie, Xue, Cao, Li, Su, and Zhu]{bao2022all}
Fan Bao, Shen Nie, Kaiwen Xue, Yue Cao, Chongxuan Li, Hang Su, and Jun Zhu.
\newblock All are worth words: A vit backbone for diffusion models.
\newblock In \emph{CVPR}, 2023.

\bibitem[Bao et~al.(2022)Bao, Dong, Piao, and Wei]{BEiT}
Hangbo Bao, Li Dong, Songhao Piao, and Furu Wei.
\newblock {BEiT}: {BERT} pre-training of image transformers.
\newblock In \emph{International Conference on Learning Representations}, 2022.

\bibitem[Brown et~al.(2020)Brown, Mann, Ryder, Subbiah, Kaplan, Dhariwal, Neelakantan, Shyam, Sastry, Askell, et~al.]{GPT3}
Tom Brown, Benjamin Mann, Nick Ryder, Melanie Subbiah, Jared~D Kaplan, Prafulla Dhariwal, Arvind Neelakantan, Pranav Shyam, Girish Sastry, Amanda Askell, et~al.
\newblock Language models are few-shot learners.
\newblock \emph{Advances in neural information processing systems}, 33:\penalty0 1877--1901, 2020.

\bibitem[Carion et~al.(2020)Carion, Massa, Synnaeve, Usunier, Kirillov, and Zagoruyko]{carion2020end}
Nicolas Carion, Francisco Massa, Gabriel Synnaeve, Nicolas Usunier, Alexander Kirillov, and Sergey Zagoruyko.
\newblock End-to-end object detection with transformers.
\newblock In \emph{European conference on computer vision}, pages 213--229. Springer, 2020.

\bibitem[Chen et~al.(2022)Chen, Duan, Wang, He, Lu, Dai, and Qiao]{ViTAdapter}
Zhe Chen, Yuchen Duan, Wenhai Wang, Junjun He, Tong Lu, Jifeng Dai, and Yu Qiao.
\newblock Vision transformer adapter for dense predictions.
\newblock \emph{arXiv preprint arXiv:2205.08534}, 2022.

\bibitem[Devlin et~al.(2018)Devlin, Chang, Lee, and Toutanova]{Bert}
Jacob Devlin, Ming-Wei Chang, Kenton Lee, and Kristina Toutanova.
\newblock Bert: Pre-training of deep bidirectional transformers for language understanding.
\newblock \emph{arXiv preprint arXiv:1810.04805}, 2018.

\bibitem[Dosovitskiy and Brox(2016)]{dosovitskiy2016inverting}
Alexey Dosovitskiy and Thomas Brox.
\newblock Inverting visual representations with convolutional networks.
\newblock In \emph{Proceedings of the IEEE conference on computer vision and pattern recognition}, pages 4829--4837, 2016.

\bibitem[Dosovitskiy et~al.(2021)Dosovitskiy, Beyer, Kolesnikov, Weissenborn, Zhai, Unterthiner, Dehghani, Minderer, Heigold, Gelly, Uszkoreit, and Houlsby]{ViT}
Alexey Dosovitskiy, Lucas Beyer, Alexander Kolesnikov, Dirk Weissenborn, Xiaohua Zhai, Thomas Unterthiner, Mostafa Dehghani, Matthias Minderer, Georg Heigold, Sylvain Gelly, Jakob Uszkoreit, and Neil Houlsby.
\newblock An image is worth 16x16 words: Transformers for image recognition at scale.
\newblock In \emph{International Conference on Learning Representations}, 2021.

\bibitem[Engel et~al.(2021)Engel, Belagiannis, and Dietmayer]{engel2021point}
Nico Engel, Vasileios Belagiannis, and Klaus Dietmayer.
\newblock Point transformer.
\newblock \emph{IEEE access}, 9:\penalty0 134826--134840, 2021.

\bibitem[Fredrikson et~al.(2015)Fredrikson, Jha, and Ristenpart]{fredrikson2015model}
Matt Fredrikson, Somesh Jha, and Thomas Ristenpart.
\newblock Model inversion attacks that exploit confidence information and basic countermeasures.
\newblock In \emph{Proceedings of the 22nd ACM SIGSAC Conference on Computer and Communications Security}, pages 1322--1333, 2015.

\bibitem[He et~al.(2022)He, Chen, Xie, Li, Doll{\'a}r, and Girshick]{he2022masked}
Kaiming He, Xinlei Chen, Saining Xie, Yanghao Li, Piotr Doll{\'a}r, and Ross Girshick.
\newblock Masked autoencoders are scalable vision learners.
\newblock In \emph{Proceedings of the IEEE/CVF conference on computer vision and pattern recognition}, pages 16000--16009, 2022.

\bibitem[Hirose et~al.(2021)Hirose, Wada, Katto, and Sun]{hirose2021vit}
Shota Hirose, Naoki Wada, Jiro Katto, and Heming Sun.
\newblock Vit-gan: Using vision transformer as discriminator with adaptive data augmentation.
\newblock In \emph{2021 3rd International Conference on Computer Communication and the Internet (ICCCI)}, pages 185--189. IEEE, 2021.

\bibitem[Hore and Ziou(2010)]{hore2010image}
Alain Hore and Djemel Ziou.
\newblock Image quality metrics: Psnr vs. ssim.
\newblock In \emph{2010 20th international conference on pattern recognition (ICPR)}, pages 2366--2369. IEEE, 2010.

\bibitem[Hu(2012)]{hu2012matrix}
Pili Hu.
\newblock Matrix calculus: Derivation and simple application.
\newblock \emph{City University of Hong Kong, Tech. Rep}, 2012.

\bibitem[Jeong et~al.(2022)Jeong, Cho, Benz, Hwang, Kim, Lee, and Kim]{jeong2022privacy}
Jonghu Jeong, Minyong Cho, Philipp Benz, Jinwoo Hwang, Jeewook Kim, Seungkwan Lee, and Tae-hoon Kim.
\newblock Privacy safe representation learning via frequency filtering encoder.
\newblock In \emph{IJCAI-ECAI Workshop on Artificial Intelligence Safety (AISafety 2022)}, 2022.

\bibitem[Krizhevsky et~al.(2009)Krizhevsky, Hinton, et~al.]{krizhevsky2009learning}
Alex Krizhevsky, Geoffrey Hinton, et~al.
\newblock Learning multiple layers of features from tiny images.
\newblock 2009.

\bibitem[Lee et~al.(2019)Lee, Lee, Kim, Kosiorek, Choi, and Teh]{lee2019set}
Juho Lee, Yoonho Lee, Jungtaek Kim, Adam Kosiorek, Seungjin Choi, and Yee~Whye Teh.
\newblock Set transformer: A framework for attention-based permutation-invariant neural networks.
\newblock In \emph{International conference on machine learning}, pages 3744--3753. PMLR, 2019.

\bibitem[Lewis et~al.(2019)Lewis, Liu, Goyal, Ghazvininejad, Mohamed, Levy, Stoyanov, and Zettlemoyer]{lewis2019bart}
Mike Lewis, Yinhan Liu, Naman Goyal, Marjan Ghazvininejad, Abdelrahman Mohamed, Omer Levy, Ves Stoyanov, and Luke Zettlemoyer.
\newblock Bart: Denoising sequence-to-sequence pre-training for natural language generation, translation, and comprehension.
\newblock \emph{arXiv preprint arXiv:1910.13461}, 2019.

\bibitem[Li et~al.(2022)Li, Mao, Girshick, and He]{ViTDet}
Yanghao Li, Hanzi Mao, Ross Girshick, and Kaiming He.
\newblock Exploring plain vision transformer backbones for object detection.
\newblock In \emph{European Conference on Computer Vision}, pages 280--296. Springer, 2022.

\bibitem[Liu et~al.(2019)Liu, Ott, Goyal, Du, Joshi, Chen, Levy, Lewis, Zettlemoyer, and Stoyanov]{liu2019roberta}
Yinhan Liu, Myle Ott, Naman Goyal, Jingfei Du, Mandar Joshi, Danqi Chen, Omer Levy, Mike Lewis, Luke Zettlemoyer, and Veselin Stoyanov.
\newblock Roberta: A robustly optimized bert pretraining approach.
\newblock \emph{arXiv preprint arXiv:1907.11692}, 2019.

\bibitem[Liu et~al.(2015)Liu, Luo, Wang, and Tang]{liu2015deep}
Ziwei Liu, Ping Luo, Xiaogang Wang, and Xiaoou Tang.
\newblock Deep learning face attributes in the wild.
\newblock In \emph{Proceedings of the IEEE international conference on computer vision}, pages 3730--3738, 2015.

\bibitem[Lu et~al.(2023)Lu, Hu, Wang, Han, Tan, Li, Yang, and Liu]{lu2023pinat}
Shun Lu, Yu Hu, Peihao Wang, Yan Han, Jianchao Tan, Jixiang Li, Sen Yang, and Ji Liu.
\newblock Pinat: a permutation invariance augmented transformer for nas predictor.
\newblock In \emph{Proceedings of the AAAI Conference on Artificial Intelligence}, pages 8957--8965, 2023.

\bibitem[Maas et~al.(2011)Maas, Daly, Pham, Huang, Ng, and Potts]{IMDB}
Andrew~L. Maas, Raymond~E. Daly, Peter~T. Pham, Dan Huang, Andrew~Y. Ng, and Christopher Potts.
\newblock Learning word vectors for sentiment analysis.
\newblock In \emph{Proceedings of the 49th Annual Meeting of the Association for Computational Linguistics: Human Language Technologies}, pages 142--150, Portland, Oregon, USA, 2011. Association for Computational Linguistics.

\bibitem[Melis et~al.(2019)Melis, Song, De~Cristofaro, and Shmatikov]{melis2019exploiting}
Luca Melis, Congzheng Song, Emiliano De~Cristofaro, and Vitaly Shmatikov.
\newblock Exploiting unintended feature leakage in collaborative learning.
\newblock In \emph{2019 IEEE Symposium on Security and Privacy (SP)}, pages 691--706. IEEE, 2019.

\bibitem[Merity et~al.(2016)Merity, Xiong, Bradbury, and Socher]{merity2016pointer}
Stephen Merity, Caiming Xiong, James Bradbury, and Richard Socher.
\newblock Pointer sentinel mixture models.
\newblock \emph{arXiv preprint arXiv:1609.07843}, 2016.

\bibitem[Naseer et~al.(2021)Naseer, Ranasinghe, Khan, Hayat, Shahbaz~Khan, and Yang]{naseer2021intriguing}
Muhammad~Muzammal Naseer, Kanchana Ranasinghe, Salman~H Khan, Munawar Hayat, Fahad Shahbaz~Khan, and Ming-Hsuan Yang.
\newblock Intriguing properties of vision transformers.
\newblock \emph{Advances in Neural Information Processing Systems}, 34:\penalty0 23296--23308, 2021.

\bibitem[Radford et~al.(2019)Radford, Wu, Child, Luan, Amodei, Sutskever, et~al.]{GPT2}
Alec Radford, Jeffrey Wu, Rewon Child, David Luan, Dario Amodei, Ilya Sutskever, et~al.
\newblock Language models are unsupervised multitask learners.
\newblock \emph{OpenAI blog}, 1\penalty0 (8):\penalty0 9, 2019.

\bibitem[Singh et~al.(2021)Singh, Chopra, Garza, Zhang, Vepakomma, Sharma, and Raskar]{singh2021disco}
Abhishek Singh, Ayush Chopra, Ethan Garza, Emily Zhang, Praneeth Vepakomma, Vivek Sharma, and Ramesh Raskar.
\newblock Disco: Dynamic and invariant sensitive channel obfuscation for deep neural networks.
\newblock In \emph{Proceedings of the IEEE/CVF Conference on Computer Vision and Pattern Recognition}, pages 12125--12135, 2021.

\bibitem[Tang and Ha(2021)]{tang2021sensory}
Yujin Tang and David Ha.
\newblock The sensory neuron as a transformer: Permutation-invariant neural networks for reinforcement learning.
\newblock \emph{Advances in Neural Information Processing Systems}, 34:\penalty0 22574--22587, 2021.

\bibitem[Touvron et~al.(2021)Touvron, Cord, Douze, Massa, Sablayrolles, and J{\'e}gou]{DeiT}
Hugo Touvron, Matthieu Cord, Matthijs Douze, Francisco Massa, Alexandre Sablayrolles, and Herv{\'e} J{\'e}gou.
\newblock Training data-efficient image transformers \& distillation through attention.
\newblock In \emph{International conference on machine learning}, pages 10347--10357. PMLR, 2021.

\bibitem[Touvron et~al.(2022)Touvron, Cord, and J{\'e}gou]{touvron2022deit}
Hugo Touvron, Matthieu Cord, and Herv{\'e} J{\'e}gou.
\newblock Deit iii: Revenge of the vit.
\newblock In \emph{European Conference on Computer Vision}, pages 516--533. Springer, 2022.

\bibitem[Vaswani et~al.(2017)Vaswani, Shazeer, Parmar, Uszkoreit, Jones, Gomez, Kaiser, and Polosukhin]{Transformer}
Ashish Vaswani, Noam Shazeer, Niki Parmar, Jakob Uszkoreit, Llion Jones, Aidan~N Gomez, {\L}ukasz Kaiser, and Illia Polosukhin.
\newblock Attention is all you need.
\newblock \emph{Advances in neural information processing systems}, 30, 2017.

\bibitem[Yao et~al.(2022)Yao, Xiang, Xu, Ye, and Chen]{patchshuffling}
Dixi Yao, Liyao Xiang, Hengyuan Xu, Hangyu Ye, and Yingqi Chen.
\newblock Privacy-preserving split learning via patch shuffling over transformers.
\newblock In \emph{2022 IEEE International Conference on Data Mining (ICDM)}, pages 638--647, 2022.

\bibitem[Yu et~al.(2022)Yu, Wang, Vasudevan, Yeung, Seyedhosseini, and Wu]{yu22coca}
Jiahui Yu, Zirui Wang, Vijay Vasudevan, Legg Yeung, Mojtaba Seyedhosseini, and Yonghui Wu.
\newblock Coca: Contrastive captioners are image-text foundation models, 2022.

\bibitem[Yuan et~al.(2021)Yuan, Chen, Wang, Yu, Shi, Jiang, Tay, Feng, and Yan]{yuan2021tokens}
Li Yuan, Yunpeng Chen, Tao Wang, Weihao Yu, Yujun Shi, Zi-Hang Jiang, Francis~EH Tay, Jiashi Feng, and Shuicheng Yan.
\newblock Tokens-to-token vit: Training vision transformers from scratch on imagenet.
\newblock In \emph{Proceedings of the IEEE/CVF International Conference on Computer Vision}, pages 558--567, 2021.

\bibitem[Zaheer et~al.(2017)Zaheer, Kottur, Ravanbakhsh, Poczos, Salakhutdinov, and Smola]{zaheer2017deep}
Manzil Zaheer, Satwik Kottur, Siamak Ravanbakhsh, Barnabas Poczos, Russ~R Salakhutdinov, and Alexander~J Smola.
\newblock Deep sets.
\newblock \emph{Advances in neural information processing systems}, 30, 2017.

\bibitem[Zhang et~al.(2022)Zhang, Li, Liu, Zhang, Su, Zhu, Ni, and Shum]{Dino}
Hao Zhang, Feng Li, Shilong Liu, Lei Zhang, Hang Su, Jun Zhu, Lionel~M. Ni, and Heung-Yeung Shum.
\newblock Dino: Detr with improved denoising anchor boxes for end-to-end object detection, 2022.

\bibitem[Zhou et~al.(2017)Zhou, Zhao, Puig, Fidler, Barriuso, and Torralba]{zhou2017scene}
Bolei Zhou, Hang Zhao, Xavier Puig, Sanja Fidler, Adela Barriuso, and Antonio Torralba.
\newblock Scene parsing through ade20k dataset.
\newblock In \emph{Proceedings of the IEEE conference on computer vision and pattern recognition}, pages 633--641, 2017.

\bibitem[Zhou et~al.(2019)Zhou, Zhao, Puig, Xiao, Fidler, Barriuso, and Torralba]{zhou2019semantic}
Bolei Zhou, Hang Zhao, Xavier Puig, Tete Xiao, Sanja Fidler, Adela Barriuso, and Antonio Torralba.
\newblock Semantic understanding of scenes through the ade20k dataset.
\newblock \emph{International Journal of Computer Vision}, 127:\penalty0 302--321, 2019.

\end{thebibliography}
}

\clearpage
\setcounter{page}{1}
\maketitlesupplementary


\section{Structure of Transformer}\label{sec:encoder}

\textbf{Transformer-based models} are the state-of-the-art deep neural networks and have attracted great attention in both areas of computer vision and natural language processing. Models including transformer encoder blocks as their backbone, such as Bert \cite{Bert}, ViT \cite{ViT}, T2T-ViT \cite{yuan2021tokens}, ViTGAN \cite{hirose2021vit}, BEiT \cite{BEiT} and CoCa \cite{yu22coca}, have been achieving exceeding performance in a great many tasks.

Transformer encoder blocks, as shown in Fig.~\ref{tranenc}, mainly contain two critical components: Multi-head Scaled-dot-product self-attention and a feed-forward network (MLP). Inputs are fed in the form of patches, which are usually embedding vectors for words in Bert, or for fractions of images in ViT. The relative position of patches are learned by position embeddings \cite{Transformer}, which are injected into the model. Fig.~\ref{tranenc} shows the main operators in a Transformer where the shortcut and the linear projection in the Attention block are left out for simplicity. 

\begin{figure}[h]
	\centering
	\includegraphics[scale = 0.6]{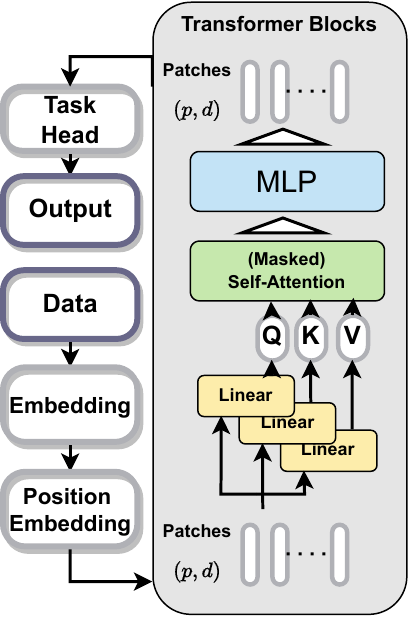}
	\caption{Transformer Encoder Block}
	\label{tranenc}
\end{figure}

The Transformer encoder block is denoted as $\mathrm{Enc}$ and the loss is $\ell$. The patch embedding of a single input $\mX$ is expressed as $\mZ$ of shape $(p,d)$. The first layer in the self-attention contains three parallel linear layers projecting $\mZ$ to $Q, K, V$ as
\begin{align}
	\mQ = \mZ \mW_Q^{\top},\\
	\mK = \mZ \mW_K^{\top},\\
	\mV = \mZ \mW_V^{\top}.
\end{align}
$Q, K, V$ are fed to the following attention operation
\begin{align}
	\mS&={Softmax}(\frac{\mQ\mK^{\top}}{\sqrt{d}}),\\
	\mA&=\mS\mV,
\end{align}
where $\mS$ and $\mA$ are the softmax output, and the attention output, respectively. 

We neglect the attention projection and the residual connection for simplicity. 
The part following the attention layer is the MLP layer:
\begin{align}
	\mA_1 &= \mA\mW_1^{\top},\\
	\mH &= a(\mA_1),\\
	\mA_2 &= \mH \mW_2^{\top}
\end{align}
where $\mA_1,\mA_2$ are the outputs of the linear layers with weights $\mW_1,\mW_2$, respectively, and $\mH$ is the output of the element-wise activation function $a$ which can be ReLu, Tanh, etc.

The backward propagation of Transformer encoder block is as following , we calculate the all the gradients from the final layer back to the first. Gradients are expressed as
\begin{align*}
	\mathrm{d}l  &=\mathrm{tr}(\frac{\partial l}{\partial \mA_2}^{\top}\mathrm{d}\mA_2)\\
	&=\mathrm{tr}(\frac{\partial l}{\partial \mA_2}^{\top}(\mathrm{d}\mH)\mW_2^{\top})+\mathrm{tr}(\frac{\partial l}{\partial \mA_2}^{\top}\mH \mathrm{d}(\mW_2^{\top})).
\end{align*}

The two additive terms are inspected in the following. Let's study $\mH$ first:
\begin{align*}
	\mathrm{d}l_1&\triangleq \mathrm{tr}(\frac{\partial l}{\partial \mA_2}^{\top}(\mathrm{d}\mH)\mW_2^{\top})\\
	&=\mathrm{tr}(\mW_2^{\top}\frac{\partial l}{\partial \mA_2}^{\top}\mathrm{d}\mH)\\
	&=\mathrm{tr}((\frac{\partial l}{\partial \mA_2}\mW_2)^{\top}\mathrm{d}\mH),
\end{align*}
indicating
\begin{equation}\label{h}
	\frac{\partial l}{\partial \mH} = \frac{\partial l}{\partial \mA_2}\mW_2.
\end{equation}

For $\mW_2$,
\begin{align*}
	\mathrm{d}l_2&\triangleq \mathrm{tr}(\frac{\partial l}{\partial \mA_2}^{\top}\mH \mathrm{d}(\mW_2^{\top}))\\
	&=\mathrm{tr}(\mathrm{d}\mW_2\mH^{\top}\frac{\partial l}{\partial \mA_2})\\
	&=\mathrm{tr}((\frac{\partial l}{\partial \mA_2}^{\top}\mH)^{\top}\mathrm{d}\mW_2),
\end{align*}
and 
\begin{equation}\label{w2}
	\frac{\partial l}{\partial \mW_2}=\frac{\partial l}{\partial \mA_2}^{\top}\mH.
\end{equation}

For $\mA_1$:
\begin{align*}
	\mathrm{d}l_1&=\mathrm{tr}((\frac{\partial l}{\partial \mH}^{\top}\mathrm{d}\mH)\\
	&=\mathrm{tr}(\frac{\partial l}{\partial \mH}^{\top} \mathrm{d}(a(\mA_1)))\\
	&=\mathrm{tr}(\frac{\partial l}{\partial \mH}^{\top} a'(\mA_1)\odot \mathrm{d}\mA_1))\\
	&=\mathrm{tr}((\frac{\partial l}{\partial\mH}\odot a'(\mA_1))^{\top} \mathrm{d}\mA_1),
\end{align*}
by Eq.~\ref{h}, we have 
\begin{equation}\label{a1}
	\frac{\partial l}{\partial \mA_1}=\frac{\partial l}{\partial \mA_2}\mW_2 \odot a'(\mA_1).
\end{equation}

Similarly, we calculate the gradients of $\mA$ and $\mW_1$:
\begin{equation}\label{A}
	\frac{\partial l}{\partial \mA}=\frac{\partial l}{\partial \mA_1}\mW_1,
\end{equation}
\begin{equation}\label{W1}
	\frac{\partial l}{\partial \mW_1}=\frac{\partial l}{\partial \mA_1}^{\top}\mA.
\end{equation}

In the attention operation:
\begin{align*}
	\mathrm{d}l_3&\triangleq \mathrm{tr}(\frac{\partial l}{\partial \mA}^{\top}\mathrm{d}\mA)\\
	&=\mathrm{tr}(\frac{\partial l}{\partial \mA}^{\top} (\mathrm{d}\mS)\mV)+\mathrm{tr}(\frac{\partial l}{\partial A}^{\top} \mS\mathrm{d}\mV)\\
	&=\mathrm{tr}((\frac{\partial l}{\partial \mA}\mV^{\top})^{\top} \mathrm{d} \mS)+\mathrm{tr}((\mS^{\top}\frac{\partial l}{\partial \mA})^{\top}\mathrm{d} \mV),
\end{align*}
and
\begin{equation}\label{S}
	\frac{\partial l}{\partial \mS}=\frac{\partial l}{\partial \mA}\mV^{\top},
\end{equation}
\begin{equation}\label{V}
	\frac{\partial l}{\partial \mV}=\mS^{\top}\frac{\partial l}{\partial \mA}.
\end{equation}

First, for $\mV =\mZ\mW_V^{\top}$:
\begin{align*}
	\mathrm{d}l_4&\triangleq \mathrm{tr}(\frac{\partial l}{\partial \mV}^{\top}\mathrm{d}\mV)\\
	&=\mathrm{tr}(\frac{\partial l}{\partial \mV}^{\top}(\mathrm{d}\mZ)\mW_V^{\top})+\mathrm{tr}(\frac{\partial l}{\partial \mV}^{\top}\mZ\mathrm{d}\mW_V^{\top}).\\
\end{align*}
Similarly, the gradients of $\mZ$ and $\mW_V$ are:
\begin{equation}\label{X}
	\frac{\partial l}{\partial \mZ}=\frac{\partial l}{\partial \mV}\mW_V,
\end{equation}
\begin{equation}\label{WV}
	\frac{\partial l}{\partial \mW_V}=\frac{\partial l}{\partial \mV}^{\top}\mZ.
\end{equation}

Now we focus on $\mS =Softmax(\frac{\mQ\mK^{\top}}{\sqrt{d} } ) $:
\begin{align*}
	\mathrm{d}l_5& \triangleq \mathrm{tr}(\frac{\partial l}{\partial \mS}^{\top}\mathrm{d}\mS)\\
	&=\mathrm{tr}(\frac{\partial l}{\partial \mS}^{\top}(diag(\mS)-\mS^{\top} \mS)\mathrm{d}(\frac{\mQ\mK^{\top}}{\sqrt{d}}))\\
	&=\mathrm{tr}(((diag(\mS)-\mS^{\top} \mS)^{\top}\frac{\partial l}{\partial \mS})^{\top}\mathrm{d}(\frac{\mQ\mK^{\top}}{\sqrt{d} } )), \\
\end{align*}
and thus
\begin{equation}\label{Q}
	\frac{\partial l}{\partial \mQ}=\frac{1}{\sqrt{d}}((diag(\mS)-\mS^{\top} \mS)^{\top}\frac{\partial l}{\partial \mS}) \mK,
\end{equation}
\begin{equation}\label{K}
	\frac{\partial l}{\partial \mK}=\frac{1}{\sqrt{d}}((diag(\mS)-\mS^{\top} \mS)^{\top}\frac{\partial l}{\partial \mS})^{\top} \mQ.
\end{equation}

And similarly the gradients of $\mW_Q$ and $\mW_K$ are:
\begin{equation}\label{WQ}
	\frac{\partial l}{\partial \mW_Q}=\frac{\partial l}{\partial \mQ}^{\top}\mZ,
\end{equation}
\begin{equation}\label{WK}
	\frac{\partial l}{\partial \mW_K}=\frac{\partial l}{\partial \mK}^{\top}\mZ.
\end{equation}

\section{Alg. on Permuted Training}\label{sec:elaboration_shuffle}
{Our permuted training} is described by pseudo code in Alg.~\ref{alg1}. It should be noted that the permutation takes place not on the dimension of `batches' but on the rest two dimensions. Taking ViT for example, each image is transformed into a $(p, d)$ matrix representing $p$ patches, and each patch denotes a fraction of the image. Each fraction is embedded into a $d$-dimensional vector.
\begin{algorithm}
	\renewcommand{\algorithmicrequire}{\textbf{Input:}}
	\renewcommand{\algorithmicensure}{\textbf{Output:}}
	\caption{Permuted Training}
	\label{alg1}
	\begin{algorithmic}[1]
		\STATE Initialization: Initialize the model. Load permutation matrices $\mP_R, \mP_C$.
		\STATE Start training
		\REPEAT
		\STATE Start a new epoch
		\REPEAT 
		\STATE Get a batch of data $\mX$ from data loader

		\STATE Get embedding $\mZ$ of size $(batch\_size,p,d)$.
        
		\IF{using row permutation}
		\STATE $\mZ = $~matmul($\mP_R, \mZ$)
		\ENDIF
		\IF{using column permutation}
		\STATE $\mZ = $~matmul($\mZ, \mP_C$)
		\ENDIF
        
		\STATE  Send $\mZ$ to the Transformer Backbone and retrieve the output $\hat{\mY}$

        \IF{using row permutation}	
		\STATE $\hat{\mY} = $~matmul($\mP_R^{-1}, \hat{\mY}$)
		\ENDIF
		\IF{using column permutaton}
		\STATE $\hat{\mY} = $~matmul($\hat{\mY}, \mP_C^{-1}$)
		\ENDIF
        
		\STATE Perform backward propagation
		\UNTIL done all batches
		\UNTIL done all epochs
	\end{algorithmic}  
\end{algorithm}

We further provide a toy example. Let $\mZ$ of shape $(3,4)$ and the row shuffle matrix $\mP_R$ be

\begin{equation*}
   \mZ = 
   \begin{pmatrix}
  1& 2 & 3 & 4\\
  5& 6 & 7 & 8\\
  9& 10 & 11 &12
\end{pmatrix}
~~~~
   \mP_R = 
   \begin{pmatrix}
  0& 1 & 0 \\
  0& 0 & 1 \\
  1& 0 & 0 
\end{pmatrix}.
\end{equation*}
The row permuted feature is
\begin{equation*}
    \mP_R\mZ = 
    \begin{pmatrix}
  5& 6 & 7 & 8\\
  9& 10 & 11 &12\\
  1& 2 & 3 & 4\\
    \end{pmatrix}.
\end{equation*}

Column permutation is performed in a similar way. Note that permutation matrices are orthogonal, i.e., $\mP_R^{-1} = \mP_R^{\top}$ and $\mP_C^{-1} = \mP_C^{\top}$.

\section{Permutation-Equivariant Operators}
\label{sec:permutation_equivariant_operators}

As far as we will show, the following operators are permutation-equivariant:
\begin{itemize}
    \item Element-wise operators,
    \item Softmax,
    \item Linear layer, 
    \item MLP,
    \item LayerNorm and BatchNorm,
    \item Attention.
\end{itemize}
In the following, we will prove the permutation-equivariance of each operator.

\textbf{Element-wise operators} including shortcut, Hadamard product, matrix addition/subtraction and other element-wise functions. We have
\begin{lemma}\label{lem:element_wize}
Element-wise operators are permutation-equivariant that
\begin{equation}\label{Eq_element_wise}
    (\mP_R\mA\mP_C) \odot (\mP_R\mB\mP_C) = \mP_R(\mA\odot\mB)\mP_C.
\end{equation}
where $\odot$  denotes the element-wise operation. 
\end{lemma}
On the left hand-side of the equation, $a_{ij}$ in $\mA$ and $b_{ij}$ in $\mB$ are permuted to the same position before being performed the operation. On the right hand-side, $a_{ij}$ and $b_{ij}$ are performed the operation of which the results are permuted. The two are obviously equivariant. Lemma~\ref{lem:element_wize} also holds for matrix addition and activation function:
\begin{equation}
    a(\mP_R\mA\mP_C) = \mP_R a(\mA) \mP_C
\end{equation}
where $a$ is an element-wise activation function, or other element-wise functions like scalar multiplication, division, etc.

\begin{lemma}\label{softmax}
Softmax is permutation-equivariant:
\begin{equation}
    Softmax(\mP_R\mA\mP_C) = \mP_R Softmax(\mA) \mP_C.
\end{equation}
\end{lemma}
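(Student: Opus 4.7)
The plan is to reduce the matrix-level identity to the element-wise level by expanding the definition of softmax. Following the convention used in Transformer attention, softmax is applied row-wise, so for any matrix $\mA$, $[\mathrm{Softmax}(\mA)]_{ij} = e^{A_{ij}} / \sum_{k} e^{A_{ik}}$. I will then handle the two permutations separately and compose them, relying only on the fact that permutation matrices have exactly one $1$ in each row and column.

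First I would dispense with the row permutation $\mP_1$. Left-multiplying by $\mP_1$ merely relabels the rows of $\mA$; since the row-wise softmax acts on each row independently, permuting rows before or after softmax gives the same matrix, yielding $\mathrm{Softmax}(\mP_1\mA) = \mP_1\,\mathrm{Softmax}(\mA)$. Next I would handle the column permutation $\mP_2$. Right-multiplying by $\mP_2$ reshuffles the entries inside each row, say $A_{ij} \mapsto A_{i,\pi(j)}$ for the underlying permutation $\pi$. The softmax denominator $\sum_k e^{A_{ik}}$ is a symmetric function of the row entries and is therefore invariant under $\pi$, while the numerators are simply permuted by the same $\pi$. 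Hence $\mathrm{Softmax}(\mA\mP_2) = \mathrm{Softmax}(\mA)\,\mP_2$. Composing the two results gives
\begin{equation*}
\mathrm{Softmax}(\mP_1\mA\mP_2) = \mP_1\,\mathrm{Softmax}(\mA\mP_2) = \mP_1\,\mathrm{Softmax}(\mA)\,\mP_2.
\end{equation*}

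There is no real obstacle here; the only subtlety worth being explicit about is the row-wise convention for softmax (so that $\mP_1$ commutes trivially) together with the symmetry of the denominator under reordering of a row (so that $\mP_2$ passes through cleanly). If one instead took a global softmax over all entries, the same argument would go through even more easily, since then both numerators and denominators are invariant under any joint permutation of positions, and only the permutation of indices would remain. Either way, the element-wise unfolding is sufficient and no heavier machinery is needed.
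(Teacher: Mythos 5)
Your argument is correct and rests on the same observation as the paper's (one-sentence) justification: each entry is normalized against the same group of entries, and permutation only relabels positions, so the row-wise denominator is invariant and the numerators are merely permuted. Your version is simply a more explicit element-wise unfolding of that idea, splitting off the row and column permutations and composing them, which is a fine way to make the paper's terse remark rigorous.
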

This is because an element is always normalized with the same group of elements, which are not changed in permutations. Thus Softmax is permutation-equivariant.

\begin{lemma}\label{Lemma_linear}
Linear layer is permutation-equivariant:
\begin{equation}
    f_{(P)}(\mP_R \mX \mP_C) = \mP_R f(\mX) \mP_C
\end{equation}
where $f(\mX) = \mX\mW^{\top}+b$ and $f_{(P)}(\mX) =\mX\mW_{(P)}^{\top}+b_{(P)}$ and:
\begin{equation*}
    \mW_{(P)} = \mP_C^{\top}\mW\mP_C,
\end{equation*}
\begin{equation*}
    b_{(P)} = b\mP_C.
\end{equation*}
\end{lemma}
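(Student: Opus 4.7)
The plan is a direct algebraic verification: I would substitute the definitions of $f_{(P)}$, $\mW_{(P)}$ and $b_{(P)}$ into the left-hand side, then collapse the permutation matrices using orthogonality $\mP_C^{\top}\mP_C = \mI$, and finally handle the bias term separately. Nothing deeper than matrix-transpose identities and the fact that $(\mP_C \mW \mP_C^{\top})^{\top} = \mP_C \mW^{\top} \mP_C^{\top}$ should be required.

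Concretely, first I would write out
\begin{equation*}
f_{(P)}(\mP_R \mX \mP_C^{\top}) = (\mP_R \mX \mP_C^{\top})\mW_{(P)}^{\top} + b_{(P)} = \mP_R \mX \mP_C^{\top}(\mP_C \mW \mP_C^{\top})^{\top} + b\mP_C^{\top}.
\end{equation*}
Taking the transpose of the inner product and using $\mP_C^{\top\top} = \mP_C$ gives $(\mP_C\mW\mP_C^{\top})^{\top} = \mP_C \mW^{\top}\mP_C^{\top}$. Plugging this back and cancelling the adjacent factor $\mP_C^{\top}\mP_C = \mI$ leaves $\mP_R \mX \mW^{\top}\mP_C^{\top} + b\mP_C^{\top}$, which equals $\mP_R(\mX\mW^{\top})\mP_C^{\top} + b\mP_C^{\top}$ by associativity.

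The one subtlety, and the step I expect to be the main obstacle for a reader, is the bias. The target right-hand side expands to $\mP_R \mX \mW^{\top}\mP_C^{\top} + \mP_R b \mP_C^{\top}$, so matching it requires $\mP_R b \mP_C^{\top} = b \mP_C^{\top}$, i.e.\ $\mP_R b = b$. This is true only because in a linear layer the bias is broadcast along the row (patch) axis: viewed as a matrix of the same shape as $\mX \mW^{\top}$, every row of $b$ is the identical vector, so any row permutation leaves it unchanged. I would state this broadcasting convention explicitly as a one-line remark, verify $\mP_R b = b$ under that convention, and then combine the two pieces to conclude
\begin{equation*}
f_{(P)}(\mP_R \mX \mP_C^{\top}) = \mP_R \mX \mW^{\top} \mP_C^{\top} + \mP_R b \mP_C^{\top} = \mP_R (\mX \mW^{\top} + b)\mP_C^{\top} = \mP_R f(\mX)\mP_C^{\top}.
\end{equation*}
Finally, it is worth noting that the same derivation shows the choice $\mW_{(P)} = \mP_C \mW \mP_C^{\top}$, $b_{(P)} = b \mP_C^{\top}$ is essentially forced: requiring the identity to hold for every $\mX$ uniquely determines $\mW_{(P)}$ by matching linear parts and $b_{(P)}$ by matching constant parts, which I would mention in passing to justify the asymmetric $\mP_C$ / $\mP_C^{\top}$ placement in the definition of $\mW_{(P)}$.
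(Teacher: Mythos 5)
Your proposal is correct and follows essentially the same route as the paper's own proof: direct substitution of $\mW_{(P)} = \mP_C\mW\mP_C^{\top}$ and $b_{(P)} = b\mP_C^{\top}$, cancellation of $\mP_C^{\top}\mP_C = \mI$, and the observation that the broadcast bias satisfies $\mP_R b = b$. If anything you are more careful than the paper, which elides the transpose on $\mW_{(P)}$ in its displayed computation and relegates the bias broadcasting to a one-line remark.
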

\begin{proof}
\begin{align*}
    f_{(P)}(\mP_R \mX \mP_C) 
    &= \mP_R \mX \mP_C\mW_{(P)}^{\top} + b_{(P)}\\
    &= \mP_R \mX \mP_C\cdot \mP_C^{\top}\mW^{\top}\mP_C+b\mP_C\\
    &=\mP_R \mX\mW^{\top}\mP_C+b\mP_C\\
    &= \mP_R f(\mX) \mP_C,
\end{align*}
where the bias $b$ is broadcast to each row. Note that if $\mP_C \neq \mI$, the identity matrix, this lemma is limited to linear layer with square weight matrix. If $\mP_C$ is not included, i.e. only row shuffle is used, then all linear layers are row-permutation equivariant.
\end{proof}

\begin{lemma}\label{lem:MLP}
    MLP is permutation-equivariant:
    \begin{equation}
        f_{(P)}(\mP_R \mX \mP_C) = \mP_R f(\mX) \mP_C
    \end{equation}
    where 
    \begin{equation*}
        f(\mX) = \sigma(\mX\mW_1^{\top}+b_1)\mW_2^\top + b_2,
    \end{equation*} 
    \begin{equation*}
        f_{(P)}(\mX) =\sigma(\mX\mW_{1(P)}^{\top}+b_{1(P)})W_{2(P)}^\top + b_{2(P)},
    \end{equation*}
    and:
    \begin{align*}
            &\mW_{1(P)} = \mW_1\mP_C,
            \mW_{2(P)} = \mP_C^{\top}\mW_2,\\
            &b_{1(P)} = b_1, 
            b_{2(P)} = b_2\mP_C,
    \end{align*}
    where $\sigma$ is the activation function, $\mW_1 \in \mathbb{R}^{t\times d}$, $\mW_2 \in \mathbb{R}^{d\times t}$, $b_1 \in \mathbb{R}^{t}$, $b_2 \in \mathbb{R}^{d}$, and $t$ is the hidden dimension of MLP.
\end{lemma}
\begin{proof}
    \begin{align*}
        f_{(P)}(\mP_R \mX \mP_C) &= \sigma(\mP_R \mX \mP_C \mW_{1(P)}^\top + b_{1(P)}) \mW_{2(P)}^\top + b_{2(P)}\\
        &= \sigma(\mP_R \mX \mW_1^\top  + b_1) \mW_2^\top\mP_C + b_2\mP_C\\
        &= \mP_R( \sigma(\mX \mW_1^\top  + b_1) \mW_2^\top+ b_2)\mP_C\\
        &= \mP_R f(\mX) \mP_C
    \end{align*}
where the third equation holds due to Lem.~\ref{lem:element_wize} and the broadcast of bias.
\end{proof}

\begin{lemma}
    Normalization (LayerNorm for example, LN for short) is permutation-equivariant:
    \begin{equation}
        \mathrm{LN}_{(P)}(\mP_R \mX \mP_C)=\mathrm{LN}(\mX)
    \end{equation}
    where $\mathrm{LN}(\mX)=\frac{\mX-\mathrm{E}(\mX)}{\sqrt{\mathrm{Var}(\mX)-\epsilon}}\ast \gamma + b$, and:
    \begin{align*}
        \gamma_{(P)} = \gamma\mP_C,~~~
        b = b\mP_C.
    \end{align*}
\end{lemma}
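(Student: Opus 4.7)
The plan is to read the claim in its natural form as the permutation-equivariance identity $\mathrm{LN}_{(P)}(\mP_R\mX\mP_C^{\top}) = \mP_R\,\mathrm{LN}(\mX)\,\mP_C^{\top}$, consistent with Def.~\ref{cpe_forward} and the preceding lemmas (I treat the bare $\mathrm{LN}(\mX)$ on the right of the displayed equation as a typo, since a nontrivial row or column shuffle genuinely changes the output of LN). I adopt the Transformer convention that $\mathrm{E}(\cdot)$ and $\mathrm{Var}(\cdot)$ are per-row statistics, returning a $p\times 1$ column broadcast across the $d$ feature columns, and that $\gamma, b$ are $1\times d$ rows broadcast across the $p$ tokens. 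The overall strategy is to factor $\mathrm{LN}$ into a centre-and-normalize map followed by the element-wise affine step with parameters $\gamma, b$, and to handle each factor separately.

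The first step is to establish $\mathrm{E}(\mP_R\mX\mP_C^{\top}) = \mP_R\,\mathrm{E}(\mX)$ and $\mathrm{Var}(\mP_R\mX\mP_C^{\top}) = \mP_R\,\mathrm{Var}(\mX)$. Column permutation by $\mP_C^{\top}$ only reorders the entries inside each row and so leaves that row's mean and variance intact, while row permutation by $\mP_R$ simply relabels which row carries which scalar. Substituting into the centre-and-normalize map and invoking Lemma~\ref{Lemma_element_wise} for the broadcast subtraction and the broadcast division yields
\begin{equation*}
    \frac{\mP_R\mX\mP_C^{\top} - \mathrm{E}(\mP_R\mX\mP_C^{\top})}{\sqrt{\mathrm{Var}(\mP_R\mX\mP_C^{\top})+\epsilon}}
    \;=\; \mP_R\,\frac{\mX - \mathrm{E}(\mX)}{\sqrt{\mathrm{Var}(\mX)+\epsilon}}\,\mP_C^{\top}.
\end{equation*}

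The second step absorbs the affine layer using the permuted parameters. With $\gamma_{(P)} = \gamma\mP_C^{\top}$ and $b_{(P)} = b\mP_C^{\top}$, the fact that $\gamma$ and $b$ are broadcast as identical rows across the $p$ tokens means $\gamma\mP_C^{\top}$ and $b\mP_C^{\top}$ are precisely the column-permutations of those broadcasts. Lemma~\ref{Lemma_element_wise} then lets $\mP_C^{\top}$ pass through both the Hadamard product with $\gamma$ and the addition of $b$, so combining with the previous step gives $\mP_R\,\mathrm{LN}(\mX)\,\mP_C^{\top}$.

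The main obstacle I expect is not algebraic but notational: every ``scalar'' in the definition of LN is silently broadcast from either a $p\times 1$ column ($\mathrm{E}, \mathrm{Var}$) or a $1\times d$ row ($\gamma, b$) to the full $p\times d$ matrix, and the identity $\gamma_{(P)} = \gamma\mP_C^{\top}$ only makes sense once that broadcast convention is pinned down. I would fix it at the start by writing each broadcast as an outer product with $\vone_p$ or $\vone_d$, so that the commutations with $\mP_R$ and $\mP_C$ reduce to mechanical identities such as $(\vone_p \gamma)\mP_C^{\top} = \vone_p(\gamma\mP_C^{\top})$ and $\mP_R(\mu\,\vone_d^{\top}) = (\mP_R\mu)\vone_d^{\top}$, rather than tacit reshape arguments.
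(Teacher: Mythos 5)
Your proof is correct and follows the same route as the paper's own (very terse) justification: the per-row statistics $\mathrm{E}$ and $\mathrm{Var}$ are invariant under column permutation and equivariant under row permutation, and the broadcast affine parameters $\gamma_{(P)}=\gamma\mP_C^{\top}$, $b_{(P)}=b\mP_C^{\top}$ absorb the column shuffle. You also rightly read the displayed right-hand side $\mathrm{LN}(\mX)$ as a typo for $\mP_R\,\mathrm{LN}(\mX)\,\mP_C^{\top}$ (the form required by Def.~\ref{cpe_forward}), and your explicit outer-product bookkeeping with $\vone_p$ and $\vone_d$ simply makes rigorous the broadcast conventions that the paper's two-sentence argument leaves tacit.
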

Since the same $\mathrm{E}(\mX)$ and $\mathrm{Var}(\mX)$ work on each element, permutation dose not affect the normalization operation. And the affine operation is `column-wise', weight $\gamma$ and bias $b$ are broadcast to each row.

\begin{lemma}\label{lem_attetion}
    Attention ($\mA = Softmax(\frac{\mQ\mK^{\top}}{\sqrt{d}})\mV$) is permutation-equivariant:
    \begin{equation}
        \begin{split}
            \mathrm{Attetion}(\mP_R\mQ\mP_C, \mP_R\mK\mP_C, \mP_R\mV\mP_C) \\ = \mP_R \mathrm{Attetion}(Q, K, V) \mP_C.
        \end{split}
    \end{equation}
\end{lemma}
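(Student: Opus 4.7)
The plan is to prove the identity by direct substitution and simplification, relying on two facts that are already in hand: permutation matrices are orthogonal (so $\mP_C^\top \mP_C = \mI$ and $\mP_R^\top \mP_R = \mI$), and softmax is permutation equivalent under simultaneous row and column permutations (Lemma~\ref{softmax}). Since $\mathrm{Attention}$ has no learnable weights of its own, no analogue of Eq.~\ref{PCWPC} is needed and I do not introduce any permuted weight matrix.

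The key steps, in order, would be the following. First, expand the argument of softmax on the left-hand side:
\begin{equation*}
(\mP_R \mQ \mP_C^\top)(\mP_R \mK \mP_C^\top)^\top
= \mP_R \mQ \mP_C^\top \mP_C \mK^\top \mP_R^\top
= \mP_R (\mQ \mK^\top) \mP_R^\top ,
\end{equation*}
so the column permutation $\mP_C$ cancels itself out inside the score matrix. Note that this is also why the scaling $1/\sqrt{d}$ is unaffected: the feature dimension is preserved by $\mP_C$. Second, invoke Lemma~\ref{softmax} with row-permutation $\mP_R$ and column-permutation $\mP_R^\top$ to push the permutation through softmax, giving
\begin{equation*}
\mathrm{Softmax}\!\left(\tfrac{\mP_R \mQ \mK^\top \mP_R^\top}{\sqrt{d}}\right)
= \mP_R \, \mathrm{Softmax}\!\left(\tfrac{\mQ \mK^\top}{\sqrt{d}}\right) \mP_R^\top .
\end{equation*}
Third, right-multiply by $\mP_R \mV \mP_C^\top$ and use $\mP_R^\top \mP_R = \mI$ to collapse the two adjacent $\mP_R$ factors, leaving $\mP_R \, \mathrm{Softmax}(\mQ \mK^\top / \sqrt{d}) \, \mV \mP_C^\top = \mP_R \, \mathrm{Attention}(\mQ, \mK, \mV) \, \mP_C^\top$, which is the desired identity.

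The main conceptual obstacle is the middle step: softmax is row-wise, so it is not a priori obvious that a column permutation on its input commutes with it. The legitimacy of this step rests entirely on Lemma~\ref{softmax}, whose justification is that permuting entries within a row leaves the normalizing denominator $\sum_j \exp(\cdot)$ unchanged, so the softmax output is permuted in the same way as the input. Once this is granted, everything else is bookkeeping with $\mP^\top \mP = \mI$. A minor additional point to address is that the statement presupposes $\mQ$, $\mK$, and $\mV$ all share the same shape so that a single $\mP_C$ applies to each; this is consistent with the square-weight reshaping in Sec.~\ref{sec:shuffled}, and I would mention this briefly so that the expression $(\mP_R \mK \mP_C^\top)^\top = \mP_C \mK^\top \mP_R^\top$ used above is unambiguous.
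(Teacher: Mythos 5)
Your proof is correct and follows essentially the same route as the paper's: expand the score matrix so the $\mP_C$ factors cancel, push $\mP_R$ and $\mP_R^{\top}$ through the softmax via Lemma~\ref{softmax}, and collapse $\mP_R^{\top}\mP_R = \mI$ against the permuted $\mV$. Your added remarks on why the row-wise softmax tolerates the column permutation and on the shape assumption for $\mQ,\mK,\mV$ are sound but not substantively different from the paper's argument.
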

\begin{proof}
    \begin{align*}
        &~~\mathrm{Attetion}(\mP_R\mQ\mP_C, \mP_R\mK\mP_C, \mP_R\mV\mP_C) \\
        &= Softmax(\frac{\mP_R\mQ\mP_C\cdot\mP_C^{\top}\mK^{\top}\mP_R^{\top}}{\sqrt{d}})\mP_R\mV\mP_C\\
        &= \mP_R Softmax(\frac{\mQ\mK^{\top}}{\sqrt{d}}) \mP_R^{\top}\cdot \mP_R\mV\mP_C\\
        &= \mP_R \mathrm{Attetion}(Q, K, V) \mP_C
    \end{align*}
    where the second equality holds because of the permutation equivariance of Softmax.
\end{proof}

Multihead attention is a special case. The validity of Thm.~\ref{thm:column_permutation_equivariance} and Thm.~\ref{thm:column_permutation_backward_equivariance} is contingent on constraining the permutation $\mP_C$ to operate within a single head. It means permutation equivariance holds if permutation is performed within each head, or on different heads, but not across heads. In that case, the feasible permutation space shrinks but the space is still considerable. Take a base Transformer for example, the possible permutations is reduced from $768!$ to $12!\times 64!$.

We later provide detailed proofs of Thm.~\ref{thm:token_permutation_equivariance}, Thm.~\ref{thm:gradient_permutation_backward_Invariance}, Thm.~\ref{thm:column_permutation_equivariance}, Thm.~\ref{thm:column_permutation_backward_equivariance} specifically for Transformer architecture.

\section{Proofs on Transformer Encoder Blocks}
\label{sec:proofEnc}

We show the detailed proof on the Transformer encoder blocks. The notations are shown in Appendix~\ref{sec:encoder}, and Transformer Encoder Block is denoted as $\mathrm{Enc}$ for short. 

\subsection{Enc is Forward Permutation-Equivariant}\label{proof_RC_forward}

$\mathrm{Enc}$ is forward permutation-equivariant. As proven above, all the basic operators in $\mathrm{Enc}$ are permutation-equivariant. The following section shows how the combination of the operators still holds in detail. The proofs of Thm.~\ref{thm:token_permutation_equivariance} and Thm.~\ref{thm:column_permutation_equivariance} are organized into one where the weight matrices are permuted by $\mP_R$ and $\mP_C$ at the same time. The row permutation equivariance can be seen as a special case where $\mP_C = \mI$, and the column permutation equivariance is a special case of $\mP_R = \mI$.

\begin{proof}
First and foremost, we `encrypt' all the weight matrices by Eq.~\ref{eq:weight_permutation}:
\begin{equation*}
    \mW_{i(p)}=\mP_C^{\top}\mW_i\mP_C,
\end{equation*}
where $\mP_C$ is the column permutation matrix, $\mW_i$ is the weight of a normal $\mathrm{Enc}$, and $i \in \{Q,K,V\}$. Weights in MLP are `encrypted' by $\mW_{1(C)} = \mW_1\mP_C,~\mW_{2(C)} = \mP_C^{\top}\mW_2$. We denote the Transformer encoder block with such `encryption' as $\mathrm{Enc}_{(P)}$. 

For $\mQ$:
\begin{align}
    \mQ_{(P)}&=\mZ_{(P)}\mW_{\mQ(P)}^{\top}\\
    &=\mP_R\mZ\mP_C\cdot \mP_C^{\top}\mW_{\mQ}^{\top}\mP_C\\
    &=\mP_R\mZ\mW_Q^{\top}\mP_C\\
    &=\mP_R\mQ\mP_C.
\end{align}
Similarly for $\mK,\mV$:
\begin{align}
    \mK_{(P)}&=\mP_R\mK\mP_C,\\
    \mV_{(P)}&=\mP_R\mV\mP_C.
\end{align}

For $\mS=Softmax(\frac{\mQ\mK^{\top}}{\sqrt{d}})$:
\begin{align}
    \mS_{(P)}&=Softmax(\frac{\mQ_{(P)}\mK_{(P)}^{\top}}{\sqrt{d}})\\
    &=Softmax(\frac{\mP_R\mQ\mP_C\cdot \mP_C^{\top}\mK^{\top}\mP_R^{\top}}{\sqrt{d}})\\
    &=Softmax(\frac{\mP_R\mQ\mK^{\top}\mP_R^{\top}}{\sqrt{d}})\\
    &=\mP_RSoftmax(\frac{\mQ\mK^{\top}}{\sqrt{d}})\mP_R^{\top}\\
    &=\mP_R\mS\mP_R^{\top}. \label{sforward}
\end{align}

So for $\mA$:
\begin{align}
    \mA_{(P)}&=\mS_{(P)}\mV_{(P)}\\
    &=\mP_R\mS\mP_R^{\top}\cdot \mP_R\mV\mP_C\\
    &=\mP_R\mS\mV\mP_C\\
    &=\mP_R\mA\mP_C.
\end{align}

Following the attention layer, $\mA$ is fed to the MLP layer:
\begin{align}
    \mA_{1(P)}&=\mA_{(P)}\mW_{1(P)}^{\top}\\
    &=\mP_R\mA\mP_C \cdot \mP_C^{\top}\mW_1\\
    &=\mP_R\mA\mW_1\\
    &=\mP_R\mA_1.
\end{align}
Similarly for $\mA_2$,
\begin{equation}
\mA_{2(P)}=\mP_R\mA_2\mP_C.
\end{equation}

As for the activation in the middle, the element-wise activation function is permutation-equivariant:
\begin{equation}
    \mH_{(P)}=\mP_R\mH.
\end{equation}

Overall, we have proved $\mathrm{Enc}$ satisfies permutation forward  equivariance.
\end{proof}

\subsection{Enc is Backward Permutation-Invariant}\label{proof_RC_backward}
According to  Thm.~\ref{thm:general_permutation_equivariance}, since all the operators in $\mathrm{Enc}$ are forward permutation-equivariant, the feature of $\mathrm{Enc}$ is backward permutation-equivariant and the weight in $\mathrm{Enc}$ is permutation-invariant. The following section shows how the combination of the operators still holds in detail. Similar to the proof of forward permutation equivariance, we prove Thm.~\ref{thm:gradient_permutation_backward_Invariance} and Thm.~\ref{thm:column_permutation_backward_equivariance} altogether in one proof where the weight matrices are permuted by $\mP_R$ and $\mP_C$ at the same time. The row permutation equivariance can be seen as a special case where $\mP_C = \mI$, and the column permutation equivariance is a special case of $\mP_R = \mI$.

\begin{proof}
Due to the shuffling and unshuffling procedures of Alg.~\ref{alg1}, we have the forward and backward propagation outside of the backbone no different from the normal ones. Hence we only focus on the propagation of the Transformer encoder blocks.

We denote $\mA_{3(P)}$ as the reversed intermediate feature that the down-stream head receives: 
\begin{equation}
    \mA_{3(P)} = \mP_R^{\top} \mA_{2(P)} \mP_C^{\top}.
\end{equation}
Since the feature is unshuffled, we have 
\begin{equation}\label{a2a3}
    \mA_{3(P)} =\mA_3 = \mA_2.
\end{equation}

First, we focus on the MLP layer:
\begin{align*}
    \mathrm{d}l&=\mathrm{tr}(\frac{\partial l}{\partial \mA_{3(P)}}^{\top} \mP_R^{\top} \mathrm{d}(\mA_{2(P)}) \mP_C^{\top})\\
    &=\mathrm{tr}(\mP_C^{\top}\frac{\partial l}{\partial \mA_{3(P)}}^{\top} \mP_R^{\top} \mathrm{d}\mA_{2(P)})\\
    &=\mathrm{tr}((\mP_R \frac{\partial l}{\partial \mA_{3(P)}} \mP_C)^{\top}\mathrm{d}\mA_{2(P)}),
\end{align*}
that is:
\begin{equation}
    \frac{\partial l}{\partial \mA_{2(P)}}=\mP_R\frac{\partial l}{\partial \mA_{3(P)}}\mP_C = \mP_R\frac{\partial l}{\partial \mA_{2}}\mP_C 
\end{equation}
by Eq.~\ref{a2a3}.

With $\mH_{(P)}=\mP_R\mH\mP_C^{\top}$ and Eq.~\ref{w2}, the gradient:
\begin{align*}
    \frac{\partial l}{\partial \mW_{2(P)}}&=\frac{\partial l}{\partial \mA_{2(P)}}^{\top}\mH_{(P)}\\
    &=\mP_C^{\top}\frac{\partial l}{\partial \mA_{2}}^{\top}\mP_R^{\top}\cdot \mP_R \mH \\
    &=\mP_C^{\top}\frac{\partial l}{\partial \mA_{2}}^{\top}\mH  \\
    &=\mP_C^{\top} \frac{\partial l}{\partial \mW_{2}} ,
\end{align*}
that is:
\begin{equation}\label{w2p}
    \frac{\partial l}{\partial \mW_{2(P)}}=\mP_C^{\top} \frac{\partial l}{\partial \mW_{2}} .
\end{equation}

By Eq.~\ref{a1} and Eq.~\ref{w2p}, we have
\begin{align*}
    \frac{\partial l}{\partial \mA_{1(P)}}&=\frac{\partial l}{\partial \mA_{2(P)}} \mW_{2(P)} \odot a'(\mA_{1(P)})\\
    &=[\mP_R \frac{\partial l}{\partial \mA_{2}}\mP_C \cdot \mP_C^{\top}\mW_2]\odot [\mP_Ra'(\mA_{1})]\\
    &=[\mP_R \frac{\partial l}{\partial \mA_{2}}\mW_2]\odot [\mP_Ra'(\mA_{1})]\\
    &=\mP_R[\frac{\partial l}{\partial \mA_{2}} \mW_{2} \odot a'(\mA_{1})]\\
    &=\mP_R\frac{\partial l}{\partial \mA_{1}},
\end{align*}
that is:
\begin{equation}
    \frac{\partial l}{\partial \mA_{1(P)}}= \mP_R\frac{\partial l}{\partial \mA_{1}}.
\end{equation}

The weight $\mW_{1(P)}$ in the MLP has the following gradient by Eq.~\ref{W1}:
\begin{align*}
    \frac{\partial l}{\partial \mW_{1(P)}}&=\frac{\partial l}{\partial \mA_{1(P)}}^{\top}\mA_{(P)}\\
    &=\frac{\partial l}{\partial \mA_{1}}^{\top}\mP_R^{\top}\cdot \mP_R\mA \mP_C\\
    &=\frac{\partial l}{\partial \mW_{1}}\mP_C,
\end{align*}
that is:
\begin{equation}
    \frac{\partial l}{\partial \mW_{1(P)}}=\frac{\partial l}{\partial \mW_{1}}\mP_C.
\end{equation}

And we come to the attention operation, from Eq.~\ref{A}, we have
\begin{align*}
    \frac{\partial l}{\partial \mA_{(P)}}&=\frac{\partial l}{\partial \mA_{1(P)}}\mW_{1(P)}\\
    &=\mP_R\frac{\partial l}{\partial \mA_{1}} \mW_1 \mP_C\\
    &=\mP_R\frac{\partial l}{\partial \mA}\mP_C,
\end{align*}
that is:
\begin{equation}
    \frac{\partial l}{\partial \mA_{(P)}}=\mP_R\frac{\partial l}{\partial \mA}\mP_C.
\end{equation}

Hence we observe the permutation rules for the gradients of the intermediate-layer outputs vary from the gradients of the weights. As for the gradients of the softmax-layer output, we have
\begin{align*}
    \frac{\partial l}{\partial \mS_{(P)}}
    &=\frac{\partial l}{\partial \mA_{(P)}}\mV_{(P)}^{\top}\\
    &=\mP_R\frac{\partial l}{\partial \mA}\mP_C \cdot \mP_C^{\top}\mV^{\top}\mP_R^{\top}\\
    &=\mP_R\frac{\partial l}{\partial \mA}\mV^{\top}\mP_R^{\top}\\
    &=\mP_R\frac{\partial l}{\partial \mS}\mP_R^{\top},
\end{align*}
that is:
\begin{equation} \label{sbackward}
    \frac{\partial l}{\partial \mS_{(P)}}=\mP_R\frac{\partial l}{\partial \mS}\mP_R^{\top}.
\end{equation}

Since $\mS_{(P)}$ follows Eq.~\ref{sforward}, we have the gradients for $\mQ_{(P)}$ combining with Eq.~\ref{sbackward}:
\begin{align*}
    \frac{\partial l}{\partial \mQ_{(P)}}
    &=\frac{1}{\sqrt{d}}[(diag(\mS_{(P)})-\mS_{(P)}^{\top}\mS_{(P)})\frac{\partial l}{\partial \mS_{(P)}}]\mK_{(P)}\\
    &=\frac{1}{\sqrt{d}}[(\mP_Rdiag(\mS)\mP_R^{\top}-\mP_R\mS^{\top}\mP_R^{\top}\cdot \mP_R\mS\mP_R^{\top})\\ &~~~~~~~~\cdot\mP_R\frac{\partial l}{\partial \mS}\mP_R^{\top}]\mP_R\mK\mP_C^{\top}\\
    &=\frac{1}{\sqrt{d}}[(\mP_Rdiag(\mS)\mP_R^{\top}-\mP_R\mS^{\top}\mS\mP_R^{\top})\\ &~~~~~~~~\cdot\mP_R\frac{\partial l}{\partial \mS}\mP_R^{\top}]\mP_R\mK\mP_C^{\top}\\
    &=\frac{1}{\sqrt{d}}[\mP_R(diag(\mS)-\mS^{\top}\mS)\\ &~~~~~~~~\cdot\mP_R^{\top}\cdot \mP_R\frac{\partial l}{\partial \mS}\mP_R^{\top}]\mP_R\mK\mP_C\\
    &=\frac{1}{\sqrt{d}}[\mP_R(diag(\mS)-\mS^{\top}\mS)\frac{\partial l}{\partial \mS}\mP_R^{\top}]\mP_R\mK\mP_C\\
    &=\frac{1}{\sqrt{d}}\mP_R(diag(\mS)-\mS^{\top}\mS)\frac{\partial l}{\partial \mS}\mP_R^{\top}\cdot \mP_R\mK\mP_C\\
    &=\mP_R\frac{1}{\sqrt{d}}(diag(\mS)-\mS^{\top}\mS)\frac{\partial l}{\partial \mS}\mK\mP_C\\
    &=\mP_R\frac{\partial l}{\partial \mQ}\mP_C.
\end{align*}
By a similar derivation on $\mK$ we obtain:
\begin{equation}
    \frac{\partial l}{\partial \mK_{(\mP)}}=\mP_R\frac{\partial l}{\partial \mK}\mP_C.
\end{equation}

Following a similar proof to the gradients of $\mW_{1(P)}$ or $\mW_{2(P)}$, we could easily derive:
\begin{align}
    \frac{\partial l}{\partial \mW_{Q(P)}}&=\mP_C^{\top}\frac{\partial l}{\partial \mW_Q}\mP_C,\\
    \frac{\partial l}{\partial \mW_{K(P)}}&=\mP_C^{\top}\frac{\partial l}{\partial \mW_K}\mP_C.
\end{align}

By Eq.~\ref{V}, the gradient of $\mV_{(P)}$ is
\begin{align*}
    \frac{\partial l}{\partial \mV_{(P)}}
    &=\mS_{(P)}^{\top}\frac{\partial l}{\partial \mA_{(P)}}\\
    &=\mP_R \mS \mP_R^{\top}\cdot \mP_R \frac{\partial l}{\partial \mA}\mP_C\\
    &=\mP_R\frac{\partial l}{\partial \mV}\mP_C,
\end{align*}
and thus we have
\begin{align}
    \frac{\partial l}{\partial \mV_{(P)}}&=\mP_R\frac{\partial l}{\partial \mV}\mP_C, \label{eq:veq}\\
    \frac{\partial l}{\partial \mW_{V(P)}}&=\mP_C^{\top}\frac{\partial l}{\partial \mW_V}\mP_C.
\end{align}

So far, we have proved the rule for the gradient of weight matrices:
\begin{equation} \label{relat}
    \frac{\partial l}{\partial \mW_{i(P)}}=\mP_C^{\top}\frac{\partial l}{\partial \mW_i}\mP_C,~~i \in \{ Q, K, V \}.
\end{equation}
\begin{equation}\label{eq:relat_MLP}
    \frac{\partial l}{\partial \mW_{1(P)}}=\frac{\partial l}{\partial \mW_{1}}\mP_C, ~~\frac{\partial l}{\partial \mW_{2(P)}}=\mP_C^{\top} \frac{\partial l}{\partial \mW_{2}} .
\end{equation}

$\mW_{i(P)}$ are the weights of $\mathrm{Enc}_{(P)}$ while $\mW_i$ are the weights of $\mathrm{Enc}$. By induction, we can reach the conclusion that if a Transformer encoder block is randomly initialized and trained with $\mZ_{(P)}$, it would eventually learn to become $\mathrm{Enc}_{(P)}$, the weights of which are associated with $\mathrm{Enc}$ by Eq.~\ref{relat} and Eq.~\ref{eq:relat_MLP}. The proof of backward equivariance on the linear projection in the attention is omitted as its proof is similar and the conclusion is the same with Eq.~\ref{relat}. Hence we have proved backward permutation in-/equi-variance.
\end{proof}

\subsection{Proofs on Embeddings}\label{proof_edge}
We show in this section that the parameters of the embedding layer $F_1$, including the position embeddings, are the same despite Alg.~\ref{alg1} is applied or not.
\begin{theorem}
The parameters of $F_1$ trained with or without permutation are the same.
\end{theorem}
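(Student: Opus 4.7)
The plan is to leverage the forward and backward permutation equivalence already established (Def.~\ref{cpe_forward}, Lem.~\ref{lemma_permutation_equivalent}, and Thm.~\ref{theorm_weight_permutation_equivalence}) to show that, at every iteration, the forward values and backward gradients visible to the edge are bit-for-bit identical in the two schemes, so that any gradient-based optimizer produces the same sequence of edge weights. I would not try to re-derive any calculus here; the work has already been done in the body of the paper, and the only thing to check is that the shuffle/unshuffle operators that sit at the edge/cloud boundary cancel out exactly.

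First, I would set notation: let $\mY = F_1(X)$, $\mY_{(P)} = \mP_R \mY \mP_C^{-1}$, and let $\mA_3$ be the input to $F_3$ in vanilla split learning and $\mA_{3(P)}$ the unshuffled tensor fed to $F_{3(P)}$ in the shuffled scheme. By applying Eq.~\ref{eq:f2forward} to the input of $F_2$ and then inverting, I get $\mA_{3(P)} = \mP_R^{-1} F_{2(P)}(\mY_{(P)}) \mP_C = F_2(\mY) = \mA_3$. Since the edge initializes $F_{1(P)} = F_1$ and $F_{3(P)} = F_3$, this already forces the forward value at every edge node (including the loss $l$) to coincide across the two runs.

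Second, for the backward pass I would track the gradient around the edge/cloud boundary. Because $l_{(P)} = l$ and $\mA_{3(P)} = \mA_3$ with $F_{3(P)} = F_3$, the standard chain rule inside $F_3$ delivers the same $\partial l/\partial \mA_3$ and the same weight gradients for $F_3$. The edge then ships the row-column reshuffled gradient $\mP_R (\partial l/\partial \mA_3) \mP_C^{-1}$ back into the cloud, which is exactly the gradient the cloud would see at the output of $F_{2(P)}$ by the transpose of the unshuffle map. Applying Lem.~\ref{lemma_permutation_equivalent} (backward permutation equivalence, Def.~\ref{cpe_backward}) at the input of $F_2$ yields $\partial l/\partial \mY_{(P)} = \mP_R (\partial l/\partial \mY) \mP_C^{-1}$, and a single unshuffle $\mP_R^{-1}(\cdot)\mP_C$ at the edge recovers $\partial l/\partial \mY$. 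From here, backprop through $F_{1(P)} = F_1$ with identical input $X$ yields identical gradients for the weights of $F_1$. Position embeddings, being a learnable parameter added to $F_1(X)$ before the shuffle, fall entirely inside $F_1$ and inherit the same gradient, so they behave no differently.

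Third, I would close by an induction on training iteration: identical initial edge weights plus identical per-iteration gradients imply identical weights after every step, for any deterministic optimizer (SGD, Adam, etc., with the same random seeds), which gives the theorem. The only place I expect friction is making the boundary bookkeeping airtight --- in particular, being explicit that sending $\partial l/\partial \mA_{3(P)}$ into the cloud's autograd is the same as feeding $\mP_R(\partial l/\partial \mA_3)\mP_C^{-1}$ into a backward pass of $F_{2(P)}$, and that the subsequent unshuffle at the edge inverts exactly the transformation built into $\mY_{(P)}$. Once this bookkeeping is pinned down, everything else is a direct invocation of results already proved earlier in the excerpt.
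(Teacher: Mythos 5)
Your proposal is correct and follows essentially the same route as the paper: establish that the forward values seen by the edge coincide (via forward permutation equivalence and the unshuffle at the boundary), then show that the gradient arriving at the pre-shuffle edge feature is unchanged because the backward permutation equivalence of $F_2$'s input is exactly cancelled by the shuffle's own backward map, so $F_1$ and $F_3$ receive identical gradients. Your additional bookkeeping (explicit treatment of $F_3$, position embeddings, and the induction over optimizer steps) only makes explicit what the paper leaves implicit in its closing sentence.
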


\begin{proof}
We denote the output of $F_1$ as $\mZ_0$, and the input of the Transformer backbone as $\mZ$. In normal setting (Eq.~\ref{eq:training_scenario}), $\mZ$ equals to $\mZ_0$, and so do their gradients. In the permuted setting where we use subscript $_{(P)}$ to denote all the varibles, the input to the backbone is the permuted output of $F_{1(P)}$:
\begin{equation}
    \mZ_{(P)} = \mP_R \mZ_{0(P)} \mP_C.
\end{equation}

To prove the weights of $F_{1(P)}$ is equivalent to those of $F_1$, we need to prove:
\begin{equation}
    \frac{\partial l}{\partial \mZ_{0(P)}} = \frac{\partial l}{\partial \mZ_0}.
\end{equation}
It is clear that
\begin{align*}
    \mathrm{d}l &= \mathrm{tr}(\frac{\partial l}{\partial \mZ_{(P)}}^{\top}\mathrm{d}\mZ_{(P)}) \\
                &=  \mathrm{tr}(\mP_C^{\top}\frac{\partial l}{\partial \mZ}^{\top}\mP_R^{\top}~ \mP_R\mathrm{d}\mZ_{0(P)}\mP_C)\\
                &=  \mathrm{tr}(\mP_C~\mP_C^{\top}\frac{\partial l}{\partial \mZ}^{\top}\mathrm{d}\mZ_{0(P)})\\
                &=  \mathrm{tr}(\frac{\partial l}{\partial \mZ}^{\top}\mathrm{d}\mZ_{0(P)}),
\end{align*}
where the second equality holds by Lemma~\ref{lem:gradient_feature_equivalent}. Hence,
\begin{equation}\label{eq:zeq}
    \frac{\partial l}{\partial \mZ_{0(P)}}=\frac{\partial l}{\partial \mZ}=\frac{\partial l}{\partial \mZ_0}.
\end{equation}
The invariance of the weights of $F_1$ can be derived from Eq.~\ref{eq:zeq}. 
\end{proof}

\blue{It is worth noting that the position embeddings are added before the permutation, so Transformer gets the right position information, since Transformer modeling the position by the position embeddings instead of the order of the input.}

\subsection{Proofs on Masked Attention}\label{proof_mask}
Masked attention is an essential component in the Transformer Decoder Block which is the backbone of the generetive language model \cite{GPT2, GPT3}. The token permutation can hardly pass through the nonlinear effect of the masked attention, but the column permutation cancels out before the mask takes effect. Thus for the masked attention, we apply column permutations only.
\begin{theorem}
    The results of Masked Softmax with or without column permutation are the same:
    \begin{equation}
        MS_{(P)} = MS.
    \end{equation}
\end{theorem}
\begin{proof}
    For $MS=Softmax(\frac{\mQ\mK^{\top}}{\sqrt{d}})$:
\begin{align}
    MS_{(P)}&=Masked-Softmax(\frac{\mQ_{(P)}\mK_{(P)}^{\top}}{\sqrt{d}})\\
    &=Masked-Softmax(\frac{\mQ\mP_C\cdot \mP_C^{\top}\mK^{\top}}{\sqrt{d}})\\
    &=Masked-Softmax(\frac{\mQ\mK^{\top}}{\sqrt{d}})\\
    &=Masked-Softmax(\frac{\mQ\mK^{\top}}{\sqrt{d}})\\&=MS. \label{msforward}
\end{align}
\end{proof}

\section{Detailed Experimental Setup}
\label{sec:exp_setup}
\subsection{Training Setup}
In fine-tuning ViT-Base on Cifar10, an Adam optimizer with a fixed learning rate $10^{-4}$ is used. The model is trained for 5 epochs with a cross-entropy loss. In fine-tuning Bert and GPT2 on IMDB classification, an Adam optimizer with a fixed learning rate $10^{-5}$ is used. The models are trained for 2 epochs.

For the `unauthorized' and train-from-scratch setting on Cifar10 of ViT, the default training setting of ViT is used. Random data augmentation and cosine scheduler are added and the model is trained for 200 epochs till convergence.

The text generation in the properties validation experiments is zero-shot learning on huggingface pre-trained GPT2. The model is fine-tuned with Adam optimizer and learning rate $10^{-5}$ for 1 epoch.

For the CelebA attribute classification task in the `privacy-preserving split learning' experiments, we adopt {\tt timm} pre-trained model vit\_base\_patch16\_224 (ViT-Base) to transfer to a 40-binary-attribute classification task. A SGD optimizer is used with a cosine scheduler, for which the (initial, final) learning rate are set to $(0.05, 2 \times 10^{-4})$ and $(5\times 10^{-4}, 2 \times 10^{-6})$ for the classification head and the encoder blocks, respectively.

\subsection{Threat Model of Privacy-Preserving Split Learning}
The threat model is consist with \cite{patchshuffling,jeong2022privacy}. We assume the edge holds a private training data set $\sD_{train} = \{X,Y\}$, where $X$ are the private data and $Y$ are the private labels. The edge aims at training a model with the assistance of the cloud, yet without exposing the private data. The cloud possesses powerful computing power and is honest-but-curious, meaning it obeys the protocol and performs the learning task accordingly, but is curious about the private data. The edge selects a model and splits it into three parts: $F_1, \mathrm{Enc}, F_2.$ $F_1, F_2$ are parts close to the input layer and the output layer, respectively, which are sufficiently lightweight to deploy on the edge, whereas $\mathrm{Enc}$ is the major part run in the cloud.

Referring to the loss function as $L_{task}$ and the local privacy-preserving method as $M$, the ultimate goal is to jointly train $F_1, \mathrm{Enc}, F_2$ to
\begin{equation}\label{vanilla_split_learning}
	\underset{F_1,\mathrm{Enc},F_2}{\textrm{minimize}} ~~L_{task}(F_2(\mathrm{Enc}(F_1(X))),Y),
\end{equation}
without the edge revealing $X, Y$ to the cloud. Accessing $F_1(X)$ should not permit the cloud to infer about $X$. The cloud could launch black-box inversion attacks: 

\textbf{Black-box attackers} collect the auxiliary data set $X_{aux}$ and the corresponding features under protection mechanism $M$ as $M(F_1(X_{aux}))$, maybe over multiple training rounds. The attacker trains an inversion model $G$ over $(X_{aux}, M(F_1(X_{aux})))$ to invert the raw input from features \citep{dosovitskiy2016inverting,fredrikson2015model,melis2019exploiting} by
\begin{equation}\label{blackbox}
	\underset{G}{\textrm{minimize}} ~~ L_{atk}(G(M(F_1(X_{aux}))),X_{aux}).
\end{equation}
The loss $L_{atk}$ can be the mean square error (MSE) between the reconstructed input $\tilde{X}_{aux}$ and $X_{aux}$. At convergence, $G$ works as a decoder to invert features into inputs. We adopt the MAE decoder $G$ with base-size Transformer backbone and a Tanh activation layer, pre-trained on ImageNet, as the model inversion model. We train $G$ with an AdamW optimizer with a learning rate of $10^{-4}$ for 30 epochs.

\end{document}